\newcommand{\bbZ}{\mathbb{Z}}
\newcommand{\bbN}{\mathbb{N}}
\def\vc#1{\mbox{\boldmath $#1$}}
\newtheorem{thm}{Theorem}[section]
\newtheorem{dft}[thm]{Definition}
\newtheorem{lem}[thm]{Lemma}
\newtheorem{remark}{Remark}
\begin{document}
\title{Exact Solutions for M/M/c/Setup Queues}

\date{}
\author{
Tuan PHUNG-DUC \\
Department of Mathematical and Computing Sciences \\
Tokyo Institute of Technology \\
Email: tuan@is.titech.ac.jp
}
\maketitle
\begin{abstract}
Recently multiserver queues with setup times have been extensively studied because they have applications in power-saving data centers. A challenging model is the M/M/$c$/Setup queue where a server is turned off when it is idle and is turned on if there are some waiting jobs. Recently, Gandhi et al.~\cite{Gandhi13,Gandhi14} obtain the generating function for the number of jobs in the system using the recursive renewal reward approach. In this paper, we derive exact solutions for the joint stationary queue length distribution of the same model using two alternative methodologies: generating function approach and matrix analytic method. The generating function approach yields exact closed form expressions for the joint stationary queue length distribution and the conditional decomposition formula. On the other hand, the matrix analytic approach leads to an exact recursive algorithm to calculate the joint stationary distribution and performance measures so as to provide some application insights.
\end{abstract}

\section{Introduction}
The core part of cloud computing is data center where a large number of servers are available. These servers consume a large amount of energy. Thus, the key issue for the management of these server farms is to minimize the power consumption while keeping acceptable service level for users. It is reported that under the current technology an idle server still consumes about 60\% of its peak when processing jobs~\cite{Barroso07}. A natural suggestion to save power is to turn off idle servers. However, off servers need some setup time to be active during which they consume power but cannot process jobs. Thus, there exists a trade-off between power-saving and performance. This motivates the study of multiserver queues with setup times.  

Although queues with setup times have been extensively investigated in the literature, most papers deal with single server case~\cite{Takagi90,Bischof01,Choudhury98,Choudhury00}. These papers analyze single server queues with a general service time distribution. Artalejo et al.~\cite{Artalejo05} present an analysis for the multiserver queue with setup times where the authors consider the case in which at most one server can be in the setup mode at a time. This policy is later referred to as staggered setup in the literature~\cite{Gandhi10}. Artalejo et al.~\cite{Artalejo05} derive an analytical solution by solving the set of balance equations for the joint stationary distribution of the number of active servers and that of jobs in the system using a difference equation approach. The solution of the staggered setup model is significantly simplified by Gandhi et al.~\cite{Gandhi10}.

Recently,  motivated by applications in data centers, multiserver queues with setup times have been extensively investigated in the literature. In particular, Gandhi et al.~\cite{Gandhi10} extensively analyze multiserver queues with setup times. They obtain some closed form approximations for the ON-OFF policy where any number of servers can be in the setup mode at a time. As is pointed out in Gandhi et al.~\cite{Gandhi10}, from an analytical point of view the most challenging model is the ON-OFF policy where the number of servers in setup mode is not limited. Recently, Gandhi et al.~\cite{Gandhi13,Gandhi14} analyze the M/M/$c$/Setup model with the ON-OFF policy using a recursive renewal reward approach. Gandhi et al.~\cite{Gandhi13,Gandhi14} obtain the generating function of the number of jobs in the system and investigate the response time distribution.

The main aim  of our current paper is to derive explicit solutions for the joint queue length distribution for the M/M/$c$/Setup model with ON-OFF policy via two standard methodologies, i.e., generating function approach and matrix analytic method. The advantage of the generating function approach is that it provides detailed results for the joint stationary distribution, i.e., exact expressions for the joint stationary queue length distribution, generating functions and factorial moments of any order. Furthermore, the generating function approach gives a new look to the conditional decomposition for the queue length. On the other hand, the matrix analytic method yields an efficient algorithm where the rate matrix ($R$) and the first passage probability matrix ($G$) are explicitly obtained. In the two methods of this paper, we exploit special structure of the non-homogeneous part of the underlying Markov chain to have significant reductions of the computational complexity in comparison with existing methods in the literature~\cite{Gandhi13,Gandhi14,Benny_Johan11}.

Some closely related works are as follows. Mitrani~\cite{Mitrani11,Mitrani13} considers models for server farms with setup costs. The author analyzes the models where a group of reserve servers are shutdown instantaneously if the number of jobs in the system is smaller than some lower threshold and are powered up instantaneously when the queue length exceeds some upper threshold. Because of this instantaneous shutdown and setup, the underlying Markov chain in~\cite{Mitrani13} has a simple birth and death structure which allows closed form solutions. The author investigates the optimal lower and upper thresholds for the system. Mitrani~\cite{Mitrani11} extends~\cite{Mitrani13} to the case where each job has an exponentially distributed random timer exceeding which the job leaves the system. Schwartz et al.~\cite{Schwartz12} consider a similar model to that in~\cite{Mitrani11}. A finite buffer model is presented and analyzed in~\cite{phungduc15} while a model with impatient customers is analyzed in~\cite{phungduc14}.

The rest of this paper is organized as follows. Section~\ref{model:sec} presents the model in detail while Section~\ref{components:sec} is devoted to the analysis of the model via generating functions. Section~\ref{matrix_ana:sec} is devoted to the analysis via matrix analytic methods. Section~\ref{comparison_approach:sec} presents a comparison of the several approaches that can be used to analyze our M/M/c/Setup model. Section~6 presents some variant models for which the methodologies in this paper can be easily adapted. Some numerical examples are presented in Section~\ref{numerical:sec} to show insights into the performance of the system. Concluding remarks are presented in~Section~\ref{conclusion:sec}.

\section{Model and Markov Chain}\label{model:sec}
\subsection{Model}
We consider M/M/$c$/Setup queueing systems with ON-OFF policy. Jobs arrive at the system according to a Poisson process with rate $\lambda$. We assume that the service time of jobs follows an exponential distribution with mean $1/\mu$. In this system, upon service completion, a server is turned off immediately if there are no waiting jobs. Otherwise, it immediately takes a waiting job to process. Upon the arrival of a job, an OFF server (if any) is turned on and the job is placed in the buffer. However, a server needs some setup time to be active so as to serve waiting jobs. We assume that the setup time follows the exponential distribution with mean $1/\alpha$. Assuming that there are two jobs in the system, one job is receiving service and the other job in the buffer is waiting for a server in setup process. Under this situation, if the service completes before the setup, the waiting job is served immediately by the active server and the server in setup process is turned off. 

Let $j$ denote the number of customers in the system and $i$ denote the number of active servers. The number of servers in setup process is $\min(j-i, c-i)$. Under these assumptions, the number of active servers is smaller than or equal to the number of jobs in the system. Therefore, in this model a server is in either BUSY or OFF or SETUP. We assume that waiting jobs are served according to a first-come-first-served (FCFS) manner. We call this model an M/M/$c$/Setup queue. The exponential assumptions for the inter-arrival, setup time and service time allow us to construct a Markov chain whose stationary distribution is explicitly obtained. It should be noted that we can easily construct a Markov chain for a more general model with Markovian arrival process (MAP) and phase-type service and setup time distributions. However, the number of states of the resulting Markov chain explodes and thus analytical solutions do not exist. 

\subsection{Markov chain and notations}
It is easy to see that the stability condition for the system is $\lambda < c \mu$ because all the servers are eventually active if the number of 
jobs in the system is large enough. 
Let $C(t)$ and $N(t)$ denote the number of busy servers and the total number of jobs in the system, respectively. 
Under the assumptions made in Section~\ref{model:sec}, it is easy to see that $\{X(t) = (C(t), N(t)); t \geq 0\}$ forms a Markov chain in the state space 
\[
	\mathcal{S} = \{ (i,j); i = 0,1,\dots,c, j = i,i+1,\dots \}.
\]
%
%
See Figure~\ref{m:fig} for the transitions among states.
Let 
\[
\pi_{i,j} = \lim_{t \to \infty} \mathbb{P} (C(t) = i, N(t) = j), \qquad (i,j) \in \mathcal{S}.
\]
It should be noted that at the state $(i,j)$ the number of waiting jobs is $j-i$. 
We define the generating functions for the number of waiting jobs as follows.
\[
	\Pi_i (z) = \sum_{j=i}^\infty \pi_{i,j} z^{j-i}, \qquad i = 0,1,\dots,c.
\]
We are also interested in finding the factorial moments defined by $\Pi_i^{(n)} (1)$, where $f^{(n)} (x)$ denotes the $n$-th derivative of $f(x)$. We denote the set of non-negative integers and that of positive integers as follows.
\[
	\bbZ_+ = \{ 0,1,2,\dots \}, \qquad \bbN = \{ 1,2,3, \dots \}.
\]

\begin{dft}
For $\phi \in \mathbb{R}$, the Pochhammer symbol is defined as follows.
\[
	(\phi)_n = \left \{ 
	\begin{array}{ll}
	1 & n = 0, \\
	\phi (\phi + 1) \cdots (\phi + n-1), & n \in \bbN.
	\end{array}	
	\right.
\]
\end{dft}

\begin{figure}
\begin{center}
%
\includegraphics[scale=0.53]{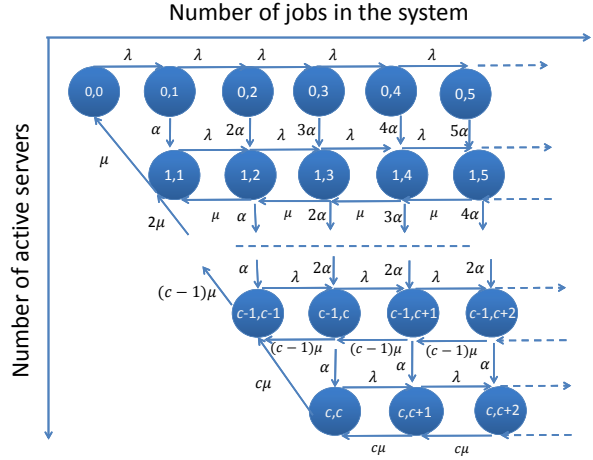}
\end{center}
\caption{State transition diagram.}
\label{m:fig}
%
\end{figure}

\section{Generating Function Approach}\label{components:sec}
In this section, we derive explicit expressions for the generating functions and the factorial moments. 
The term ``explicit" means that these expressions do not contain {\it limits} and they can be exactly calculated using a finite procedure.

\subsection{Explicit expressions}

The balance equations for the case $i=0$ read as follows. 
\begin{align}
 \label{pi_00:eq}
 \lambda \pi_{0,0} & = \mu \pi_{1,1},  \qquad j = 0, \\
 \label{pi_0jleqc1:eq}
(\lambda + j \alpha) \pi_{0,j} & =  \lambda \pi_{0,j-1} , \qquad j = 1,2,\dots,c-1, \\
 \label{pi_0jgeqc:eq} 
(\lambda + c \alpha) \pi_{0,j} & = \lambda \pi_{0,j-1}, \qquad  j \geq c.
\end{align}
Let $\widehat{\Pi}_0 (z) = \sum_{j=c}^\infty \pi_{0,j} z^j$.
Multiplying (\ref{pi_0jgeqc:eq}) by $z^j$ and summing over $j \geq c$, we obtain 
\begin{equation}\label{widehatp0z:eq}
  \widehat{\Pi}_0 (z) = \frac{  \lambda \pi_{ 0, c-1} z^c }{ \lambda + c \alpha - \lambda z} = z^c \frac{A_{0,0}}{\hat{z}_0 - z}, \qquad \Pi_0 (z) = \sum_{j=0}^{c-1} \pi_{0,j} z^j + \widehat{\Pi}_0 (z),
\end{equation}
where 
\[
	A_{0,0} = \lambda \pi_{0,c-1}, \qquad \hat{z}_0 = \frac{\lambda + c\alpha}{\lambda}.
\]
Equation (\ref{pi_0jleqc1:eq}) yields 
\[
	\pi_{0,j} = \pi_{0,0} \prod_{i=0}^j \frac{ \lambda }{ \lambda + j \alpha  }, \qquad j = 1,2,\dots,c-1.
\]
Furthermore, from the first equation in (\ref{widehatp0z:eq}), we obtain 
\[
	\pi_{0,j} =  \frac{  \lambda \pi_{ 0, c-1} }{ \lambda + c\mu} \left(  \frac{\lambda}{\lambda + c\alpha} \right)^{j-c} = \frac{A_{0,0}}{\hat{z}_0} \left(  \frac{1}{\hat{z}_0} \right)^{j-c}, \qquad j \geq c.
\]
\begin{remark}
At this moment, we have the fact that $\pi_{0,j} $ $(j \geq 1)$ and $\pi_{1,1}$ are expressed in terms of $\pi_{0,0}$.
\end{remark}
Differentiating (\ref{widehatp0z:eq}) $n$ times yields the following recursive formulae for the factorial moments. 
\begin{align*}
	\widehat{\Pi}_0^{(n)} (1) & = \frac{\lambda}{c\mu} \widehat{\Pi}_0^{(n-1)} (1) + \frac{\lambda}{c\mu} \pi_{0,c-1} (c-n)_n, \\
	{\Pi}_0^{(n)} (1)  & =  \sum_{j = 0}^{c-1} \pi_{0,j} (j-n+1)_n + \widehat{\Pi}_0^{(n)} (1),
\end{align*}
for $n \in \bbN$.

We shift to the case $i=1$. The balance equations are given as follows. 
\begin{align}
\label{pi1j_j1:eq}
(\lambda + \mu) \pi_{1,1} & = \alpha \pi_{0,1} + \mu \pi_{1,2} + 2\mu \pi_{2,2}, \\
\label{pi1j:eq_original}
(\lambda + \mu + (j-1) \alpha) \pi_{1,j} & = j \alpha \pi_{0,j} + \lambda \pi_{1,j-1} + \mu \pi_{1,j+1}, \qquad 2 \leq j \leq c-1, \\
\label{pi1j_jgeqc}
(\lambda + \mu + (c-1)\alpha ) \pi_{1,j} & = c \alpha \pi_{0,j} + \lambda \pi_{1,j-1} + \mu \pi_{1,j+1}, \qquad j \geq c.
\end{align}
Letting $\widehat{\Pi}_1 (z) = \sum_{j=c}^\infty \pi_{1,j} z^{j-1}$, we have $\Pi_1 (z) = \sum_{j=1}^{c-1} \pi_{1,j} z^{j-1} + \widehat{\Pi}_1 (z)$.
Multiplying (\ref{pi1j_jgeqc}) by $z^{j-1}$ and summing up over $j \geq c$ yields, 
\begin{equation}
(\lambda + \mu + (c-1) \alpha) \widehat{\Pi}_1(z) = \frac{c \alpha}{z} \widehat{\Pi}_0 (z) + \lambda z \widehat{\Pi}_1 (z) + \lambda \pi_{1,c-1} z^{c-1} + \frac{\mu}{z} (\widehat{\Pi}_1(z) - \pi_{1,c} z^{c-1}). 
\end{equation}
Rearranging this equation we obtain 
\begin{eqnarray}\label{pi1z:eq}
[ (\lambda + \mu + (c-1) \alpha) z - \lambda z^2 -\mu  ] \widehat{\Pi}_1 (z)  =   c \alpha \widehat{\Pi}_0 (z) + \lambda \pi_{1,c-1} z^c - \mu \pi_{1,c} z^{c-1}.
\end{eqnarray}
Let $f_1 (z) = (\lambda + \mu + (c-1) \alpha) z - \lambda z^2 -\mu$. 
Because $f_1(0) = -\mu < 0$, $f_1(1) = (c-1)\alpha >0$ and $f_1(\infty) = - \infty$, $f_1(z)$ has two roots $z_1$ and $\hat{z}_1$ such that $0 <  z_1 < 1 < \hat{z}_1$. We have 
\begin{align*}
	z_1 & = \frac{  \lambda + \mu + (c-1) \alpha - \sqrt{ (\lambda + \mu + (c-1) \alpha)^2 - 4 \lambda \mu }   }{2 \lambda}, \\
	\hat{z}_1 & = \frac{  \lambda + \mu + (c-1) \alpha + \sqrt{ (\lambda + \mu + (c-1) \alpha)^2 - 4 \lambda \mu }   }{2 \lambda}.
\end{align*}
Substituting $z=z_1$ into (\ref{pi1z:eq}), we obtain 
\begin{equation}
\label{pi1c:eq}
 \pi_{1,c} = \frac{ c \alpha \widehat{\Pi}_0 (z_1) + \lambda \pi_{1,c-1} z_1^{c}  }{ \mu z_1^{c-1} }.
\end{equation}
We derive a recursive scheme to determine $\pi_{1,j}$ ($j = 2,3,\dots,c$). Indeed, rewriting (\ref{pi1c:eq}) yields 
\[
	\pi_{1,c} = a^{(1)}_c + b^{(1)}_c \pi_{1,c-1},
\]
where 
\begin{equation}\label{a1cb1c:eq}
	a^{(1)}_c = \frac{c \alpha \widehat{\Pi}_0 (z_1)}{\mu z_1^{c-1}}, \qquad b^{(1)}_c = \frac{\lambda z_1}{\mu}.
\end{equation}
Using mathematical induction, we obtain the following lemma.

\begin{lem}\label{lemma1:lem}
\begin{equation}\label{pi1j:eq}
	\pi_{1,j} = a^{(1)}_j + b^{(1)}_j \pi_{1,j-1}, \qquad 2 \leq j \leq c, 
\end{equation}
where 
\begin{equation}\label{a1j:b1j:eq}
	a^{(1)}_j = \frac{ j \alpha \pi_{0,j}  }{ \lambda + \mu + (j-1) \alpha - \mu b^{(1)}_{j+1}  } , \qquad b^{(1)}_j = \frac{ \lambda }{ \lambda + \mu + (j-1) \alpha - \mu b^{(1)}_{j+1}  }, 
\end{equation}
for $j = c-1,c-2,\dots,1$. Furthermore, we have 
\[
	0 < a^{(1)}_j, \qquad 0 < b^{(1)}_j < \frac{\lambda}{\mu}, \qquad j = 1,2\dots,c.
\]
The generating function $\widehat{\Pi}_1 (z)$ is explicitly obtained as follows.
\begin{equation}\label{pi_1hat(z):eq}
\widehat{\Pi}_1 (z) = z^{c-1} \left( \frac{A_{1,0}}{\hat{z}_0 - z} + \frac{A_{1,1}}{\hat{z}_1 - z} \right),
\end{equation}
where 
\[
	A_{1,0} = \frac{A_{0,0} \hat{z}_0}{f_1(\hat{z}_0)}, \qquad A_{1,1} = -\frac{A_{0,0} \hat{z}_0}{f_1(\hat{z}_0)} + \pi_{1,c-1}.
\]
\end{lem}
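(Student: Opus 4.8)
The plan is to prove the three assertions in turn, all resting on the downward recursion that~(\ref{pi1c:eq}) seeds at $j=c$. For the recursion~(\ref{pi1j:eq}) I would use downward induction on $j$: the base case is~(\ref{pi1c:eq}) rewritten with the coefficients~(\ref{a1cb1c:eq}), and for the step I assume $\pi_{1,j+1}=a^{(1)}_{j+1}+b^{(1)}_{j+1}\pi_{1,j}$ and substitute it into the balance equation~(\ref{pi1j:eq_original}) for $\pi_{1,j}$. The term $\mu\pi_{1,j+1}$ then contributes a multiple of $\pi_{1,j}$, which I move to the left-hand side; this is exactly what turns the coefficient of $\pi_{1,j}$ into $\lambda+\mu+(j-1)\alpha-\mu b^{(1)}_{j+1}$, and solving for $\pi_{1,j}$ in terms of $\pi_{1,j-1}$ yields the quotients in~(\ref{a1j:b1j:eq}). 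The only thing to verify is that this common denominator never vanishes, which is supplied by the bounds below.

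For the inequalities I would again induct downward, starting from $b^{(1)}_c=\lambda z_1/\mu\in(0,\lambda/\mu)$, which holds because $0<z_1<1$, and from $a^{(1)}_c>0$, which holds since $\widehat{\Pi}_0(z_1)>0$ and every parameter is positive. Assuming $0<b^{(1)}_{j+1}<\lambda/\mu$, the strict inequality $\mu b^{(1)}_{j+1}<\lambda$ forces $\lambda+\mu+(j-1)\alpha-\mu b^{(1)}_{j+1}>\mu+(j-1)\alpha\ge\mu>0$. Positivity of $a^{(1)}_j$ and of $b^{(1)}_j=\lambda/(\lambda+\mu+(j-1)\alpha-\mu b^{(1)}_{j+1})$ is then immediate, and the upper bound $b^{(1)}_j<\lambda/\mu$ is equivalent to this same denominator exceeding $\mu$, which is precisely what was just shown. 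Hence both properties propagate from $j+1$ down to $j$, covering $j=1,\dots,c$.

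Finally, for the closed form~(\ref{pi_1hat(z):eq}) I would substitute $\widehat{\Pi}_0(z)=z^cA_{0,0}/(\hat{z}_0-z)$ from~(\ref{widehatp0z:eq}) into~(\ref{pi1z:eq}) and factor $f_1(z)=-\lambda(z-z_1)(z-\hat{z}_1)$. The key point is that $\pi_{1,c}$ was chosen in~(\ref{pi1c:eq}) precisely so that the right-hand side of~(\ref{pi1z:eq}) vanishes at $z=z_1$; the apparent pole at $z_1$ is therefore removable, the factor $(z-z_1)$ cancels, and what remains is a rational function carrying an overall $z^{c-1}$ whose only finite singularities are simple poles at $\hat{z}_0$ and $\hat{z}_1$. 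A partial-fraction decomposition then yields the two-term form, after which I would read off $A_{1,0}$ from the residue at $\hat{z}_0$ (where only the $\widehat{\Pi}_0$-term is singular) and fix $A_{1,1}$ by matching the leading $1/z$ behaviour as $z\to\infty$, which gives $A_{1,0}+A_{1,1}=\pi_{1,c-1}$ and hence the stated value. I expect this last part to be the main obstacle: the delicate step is justifying the removable singularity at $z_1$ and confirming that the surviving poles at $\hat{z}_0$ and $\hat{z}_1$ are simple and distinct, so that the partial-fraction form is legitimate, rather than any bookkeeping with the coefficients $a^{(1)}_j$ and $b^{(1)}_j$.
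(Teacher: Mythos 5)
Your proposal is correct and follows essentially the same route as the paper: the same downward induction seeded at $j=c$ by (\ref{pi1c:eq}) gives both the recursion (\ref{pi1j:eq}) and the bounds on $a^{(1)}_j$, $b^{(1)}_j$ (with the same observation that $\mu b^{(1)}_{j+1}<\lambda$ keeps the common denominator above $\mu$), and the closed form (\ref{pi_1hat(z):eq}) is obtained exactly as in the paper by substituting $\widehat{\Pi}_0(z)=z^c A_{0,0}/(\hat{z}_0-z)$ into (\ref{pi1z:eq}), cancelling the factor $(z-z_1)$ --- removable precisely because $\pi_{1,c}$ was chosen via (\ref{pi1c:eq}) --- and splitting into partial fractions, your residue-plus-asymptotics extraction of the coefficients being only a cosmetic variant of the paper's identity (\ref{decompose:formula}). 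Incidentally, carrying out your residue computation at $\hat{z}_0$ yields $A_{1,0}=c\alpha A_{0,0}\hat{z}_0/f_1(\hat{z}_0)$, which shows that the factor $c\alpha$ (the $(c-i+1)\alpha$ of Lemma~\ref{lemma32} with $i=1$) is missing from the coefficients as displayed in the lemma statement, though your normalization $A_{1,0}+A_{1,1}=\pi_{1,c-1}$ holds either way.
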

\begin{proof}
We use mathematical induction for the proof of this lemma. 
First, we prove (\ref{pi1j:eq}). It is clear that (\ref{pi1j:eq}) is true for $j=c$ due to (\ref{a1cb1c:eq}). Assuming that (\ref{pi1j:eq}) is true for $j+1$, i.e., 
\[ 
\pi_{1,j+1} = a^{(1)}_{j+1} + b^{(1)}_{j+1}\pi_{1,j},
\]
 for some $j \leq c-1$. Substituting this expression into (\ref{pi1j:eq_original}) and rearranging the result we obtain (\ref{pi1j:eq}). 
Next, we also prove the inequalities. It is clear that Lemma~\ref{lemma1:lem} is true for $j=c$ since 
\[
	0 < a^{(1)}_c, \qquad 0 < b^{(1)}_c  < \frac{\lambda}{\mu}, 
\]
because $0 < z_1 < 1$. 
Assuming that $0 < b^{(1)}_{j+1}  < \frac{\lambda}{\mu}$ and $a^{(1)}_{j+1} > 0$, we have 
\[
	\mu + (c-1)\alpha < \lambda + \mu + (c-1) \alpha - \mu b^{(1)}_{j+1} < \lambda + \mu + (c-1) \alpha,
\]
which together with (\ref{a1j:b1j:eq}) yield
\[
	0 < \frac{\lambda}{\lambda + \mu + (j-1) \alpha} < b^{(1)}_j < \frac{\lambda}{\mu + (j-1) \alpha} < \frac{\lambda}{\mu}.
\]
and 
\[
	0 < \frac{j \alpha \pi_{0,j}}{\lambda + \mu + (j-1)\alpha} < a^{(1)}_j.
\]

Substituting (\ref{pi1c:eq}) to (\ref{pi1z:eq}) and dividing both sides by $(z-z_1)$, we obtain (\ref{pi_1hat(z):eq}) after some rearrangement. It should be noted that (\ref{decompose:formula}) 
is used to decompose $\Pi_1(z)$ into simple form.
\begin{equation}\label{decompose:formula}
\frac{1}{(a-z)(b-z)} = \frac{1}{b-a} \left( \frac{1}{a-z} - \frac{1}{b-z} \right), \qquad \forall \ a \neq b.
\end{equation}
\end{proof}

\begin{remark}
At this moment, $\pi_{1,j}$ ($j \geq 1$) is expressed in terms of $\pi_{0,0}$.
Thus, $\pi_{2,2}$ is also expressed in terms of $\pi_{0,0}$ due to the following formula representing the balance between the rates in and out the set $\{(i,j); i = 0,1; j \geq i \}$, i.e., 
\[
	2\mu \pi_{2,2} = \sum_{j=2}^\infty \min(j-1,c-1)\alpha \pi_{1,j}.
\]
\end{remark}
We are interested in finding the factorial moments. Taking the derivative of (\ref{pi1z:eq}) $n$ times yields
\begin{eqnarray}
\lefteqn{ f_1(z) \widehat{\Pi}_1^{(n)} (z) +  n f_1^\prime (z) \widehat{\Pi}_1^{(n-1)} (z) + \frac{n(n-1)}{2} f_1^{\prime \prime} (z) \widehat{\Pi}_1^{(n-2)} (z) = } \nonumber \\
&& c \alpha \widehat{\Pi}_0^{(n)} (z) + \lambda \pi_{1,c-1} (c-n+1)_n z^{c-n}  - \mu \pi_{1,c} (c-n)_n z^{c-1-n}.
\end{eqnarray}
Putting $z=1$ into this equation yields,
\begin{eqnarray}\label{widehatPi1:eq}
	\widehat{\Pi}_1^{(n)} (1) &  = &  \frac{c}{c-1} \widehat{\Pi}_0^{(n)} (1) + \frac{ n (\lambda -\mu - (c-1) \alpha) \widehat{\Pi}_1^{(n-1)} (1) + \lambda n(n-1) \widehat{\Pi}_1^{(n-2)} (1)   }{(c-1)\alpha} \nonumber \\
                                    &    & \mbox{} + \frac{\lambda \pi_{1,c-1} (c-n+1)_n  - \mu \pi_{1,c} (c-n)_n  }{(c-1)\alpha}, 
\end{eqnarray}
which is a recursive formula for computing $\widehat{\Pi}_1^{(n)} (1)$ ($n \in \bbN$). 
It should be noted that $\widehat{\Pi}_0^{(n)} (1)$ is explicitly obtained from (\ref{widehatp0z:eq}). Thus, from (\ref{widehatPi1:eq}) we obtain the factorial moments ${\Pi}_1^{(n)} (1)$.

Now, we consider general case where $i = 2,3,\dots,c-1$. The balance equations are as follows.
\begin{align}
\label{pi_ii_jleqc:eq}
(\lambda + i \mu) \pi_{i,i} & = \alpha \pi_{i-1,i} + i\mu \pi_{i,i+1} + (i+1)\mu \pi_{i+1,i+1}, \quad j = i \\ 
\label{pi_ij_jleqc:eq}
(\lambda + i \mu + (j-i) \alpha) \pi_{i,j} & = \lambda \pi_{i,j-1} + (j-i+1) \alpha \pi_{i-1,j} + i \mu \pi_{i,j+1},  \qquad i+1 \leq j \leq c-1, \\
\label{pi_ij_jgeqc:eq}
(\lambda + i\mu + (c-i) \alpha)  \pi_{i,j} & = \lambda \pi_{i,j-1} + (c-i+1) \alpha \pi_{i-1,j} + i\mu \pi_{i,j+1},  \qquad j \geq c.
\end{align}
We define the generating function $\widehat{\Pi}_i (z) = \sum_{j=c}^\infty \pi_{i,j-i} z^{j-i}$. We then have $\Pi_i (z) =  \sum_{j=i}^{c-1} \pi_{i,j} z^{j-i}  +  \widehat{\Pi}_i (z) $.
Multiplying (\ref{pi_ij_jgeqc:eq}) by $z^{j-i}$ and summing over $j \geq c$, we obtain 
\begin{eqnarray}
(\lambda + i \mu + (c-i) \alpha) \widehat{\Pi}_i (z) & = & \lambda \pi_{i,c-1} z^{c-i} + \lambda z \widehat{\Pi}_i (z) + \frac{(c-i+1) \alpha}{z} \widehat{\Pi}_{i-1} (z)  \nonumber \\
                                                                       &     & \mbox{} + \frac{i \mu }{ z} (\widehat{\Pi}_i (z)  - \pi_{i,c} z^{c+1-i}). 
\end{eqnarray}
Rearranging this equation, we obtain
\begin{eqnarray} \label{Pi(z):eq}
 [(\lambda + i \mu + (c-i) \alpha) z - \lambda z^2 - i \mu ]  \widehat{\Pi}_i (z)   =  (c-i+1) \alpha \widehat{\Pi}_{i-1} (z) + \lambda \pi_{i,c-1} z^{c-i+1} - i \mu \pi_{i,c} z^{c-i}.
\end{eqnarray}
Let $f_i (z)  = (\lambda + i \mu + (c-i) \alpha) z - \lambda z^2 - i \mu $. Because $f_i (0) = -i \mu < 0$, $f_i (1) = (c-i) \alpha > 0$ and ($f_i (\infty) = - \infty$), there exists some $0 < z_i < 1 < \hat{z}_1$ 
such that $f_i (z_i) = f_i (\hat{z}_i) = 0$. In particular, we have
\begin{align*}
	z_i & = \frac{  \lambda + i \mu + (c-i) \alpha - \sqrt{ (\lambda + i \mu + (c-i) \alpha)^2 - 4 i \lambda \mu }   }{2 \lambda}, \\
	\hat{z}_i & = \frac{  \lambda + i \mu + (c-i) \alpha + \sqrt{ (\lambda + i \mu + (c-i) \alpha)^2 - 4 i \lambda \mu }   }{2 \lambda}.
\end{align*}
Putting $z= z_i$ into (\ref{Pi(z):eq}) yields,
\begin{equation}\label{pi_ic:pi_ic_minus}
\pi_{i,c} = \frac{(c-i+1) \alpha \widehat{\Pi}_{i-1} (z_i) + \lambda \pi_{i,c-1} z_i^{c-i+1} }{i\mu z_i^{c-i} }
\end{equation}
This equation together with (\ref{pi_ij_jleqc:eq}) determine $\pi_{i,j}$ ($ i+1 \leq j \leq c$) as follows.
\begin{lem}\label{lemma32}
We have 
\[
\pi_{i,j} = a^{(i)}_j + b^{(i)}_j \pi_{i,j-1}, \qquad j = i+1, i+2,\dots,c,
\]
where 
\[
	a^{(i)}_c = \frac{ (c-i+1) \alpha \widehat{\Pi}_{i-1} (z_i)} { i\mu z_i^{c-i}  }, \qquad b^{(i)}_c = \frac{\lambda z_i }{i \mu},
\]
and for $j = c-1,\dots,i+1$, 
\[
	a^{(i)}_j = \frac{ (j-i+1)\alpha \pi_{i-1,j} +   i\mu a^{(i)}_{j+1}}{ \lambda + i\mu + (j-i)\alpha - i\mu b^{(i)}_{j+1} }, \qquad b^{(i)}_j = \frac{\lambda}{\lambda + i\mu + (j-i)\alpha - i\mu b^{(i)}_{j+1}}.
\]
Furthermore, we have
\[
	0 < a^{(i)}_j, \qquad 0 < b^{(i)}_j  < \frac{\lambda}{i\mu}.
\]
In addition, the generating function $\widehat{\Pi}_i (z)$ ($i=2,\dots,c-1$) is explicitly obtained as follows.
\begin{equation}\label{explicit:Pi(z)}
	\widehat{\Pi}_i (z) = z^{c-i} \left( \sum_{j=0}^{i} \frac{A_{i,j}} { \hat{z}_j - z} \right),
\end{equation}
where 
\[
	A_{i,j}= \frac{A_{i-1,j} \hat{z}_j}{f_i (\hat{z}_j) }, \qquad A_{i,i} = -(c-i+1) \alpha \sum_{j=0}^{i-1} \frac{ A_{i-1,j} \hat{z}_j   }{ f_i (\hat{z}_j) } + \pi_{i,c-1}.
\]
\end{lem}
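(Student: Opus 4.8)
The plan is to mirror the structure of the proof of Lemma~\ref{lemma1:lem}, replacing the two-term recursion for $i=1$ by the general three-term balance equations \eqref{pi_ii_jleqc:eq}--\eqref{pi_ij_jgeqc:eq} for arbitrary $i$. The recursion $\pi_{i,j} = a^{(i)}_j + b^{(i)}_j \pi_{i,j-1}$ is established by downward induction on $j$, starting from $j=c$. The base case $j=c$ follows directly from \eqref{pi_ic:pi_ic_minus}: rewriting it as $\pi_{i,c} = a^{(i)}_c + b^{(i)}_c \pi_{i,c-1}$ identifies $a^{(i)}_c$ and $b^{(i)}_c$ as stated. For the inductive step, I assume $\pi_{i,j+1} = a^{(i)}_{j+1} + b^{(i)}_{j+1}\pi_{i,j}$ and substitute this into the interior balance equation \eqref{pi_ij_jleqc:eq}. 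Collecting the $\pi_{i,j}$ terms on the left, the coefficient becomes $\lambda + i\mu + (j-i)\alpha - i\mu b^{(i)}_{j+1}$, and solving for $\pi_{i,j}$ yields exactly the claimed expressions for $a^{(i)}_j$ and $b^{(i)}_j$ in terms of the $(j+1)$-level quantities and the known $\pi_{i-1,j}$.

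The inequalities are proved by the same downward induction. At $j=c$, positivity of $a^{(i)}_c$ holds because $\widehat{\Pi}_{i-1}(z_i)>0$ (it is a power series in the positive quantity $z_i$ with nonnegative coefficients), and $0 < b^{(i)}_c = \lambda z_i/(i\mu) < \lambda/(i\mu)$ follows from $0 < z_i < 1$. For the inductive step, assuming $0 < b^{(i)}_{j+1} < \lambda/(i\mu)$ forces the bound $i\mu + (c-i)\alpha < \lambda + i\mu + (j-i)\alpha - i\mu b^{(i)}_{j+1} < \lambda + i\mu + (j-i)\alpha$ on the denominator, which immediately gives $0 < b^{(i)}_j < \lambda/(i\mu)$ and, together with $\pi_{i-1,j} \ge 0$ and $a^{(i)}_{j+1}>0$, yields $a^{(i)}_j > 0$. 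This parallels the $i=1$ argument exactly, the only difference being the factor $i\mu$ in place of $\mu$.

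For the closed form \eqref{explicit:Pi(z)}, I substitute \eqref{pi_ic:pi_ic_minus} back into \eqref{Pi(z):eq} to cancel the factor $(z - z_i)$ in $f_i(z)$, just as in the $i=1$ case. This leaves $\widehat{\Pi}_i(z)$ expressed via $\widehat{\Pi}_{i-1}(z)$ divided by $(\hat z_i - z)$ (up to the factor $z^{c-i}$ and a polynomial correction). The crucial step is to feed in the induction hypothesis that $\widehat{\Pi}_{i-1}(z) = z^{c-i+1}\sum_{j=0}^{i-1} A_{i-1,j}/(\hat z_j - z)$, so that each term $A_{i-1,j}/\bigl((\hat z_j - z)(\hat z_i - z)\bigr)$ is split by the partial-fraction identity \eqref{decompose:formula}. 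Matching the residue at each pole $\hat z_j$ produces the recursion $A_{i,j} = A_{i-1,j}\hat z_j / f_i(\hat z_j)$ for $j<i$, while the new pole $\hat z_i$ collects the remaining terms plus the polynomial part, giving $A_{i,i}$ after evaluating at $z=\hat z_i$ and using the boundary value $\pi_{i,c-1}$.

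The main obstacle I anticipate is the bookkeeping in the partial-fraction decomposition: because $\widehat{\Pi}_{i-1}(z)$ already carries $i$ distinct poles $\hat z_0,\dots,\hat z_{i-1}$, multiplying by $1/(\hat z_i - z)$ and applying \eqref{decompose:formula} term by term must be organized carefully so that the coefficient of each existing pole transforms cleanly via $f_i(\hat z_j)$ while the contributions to the genuinely new pole $\hat z_i$ are all gathered into $A_{i,i}$. Verifying that the powers of $z$ align (the shift from $z^{c-i+1}$ to $z^{c-i}$) and that no spurious pole at $z=z_i$ survives the cancellation are the points most likely to require care, but these are routine once the induction hypothesis is stated in the form \eqref{explicit:Pi(z)}.
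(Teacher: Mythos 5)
Your proposal is correct and follows the paper's own proof essentially verbatim: the same downward induction starting from \eqref{pi_ic:pi_ic_minus} with substitution into \eqref{pi_ij_jleqc:eq} for the recursion and the positivity bounds (mirroring Lemma~\ref{lemma1:lem}), and the same substitution of the inductive form of $\widehat{\Pi}_{i-1}(z)$ into \eqref{Pi(z):eq}, cancellation of the factor $(z-z_i)$, and application of \eqref{decompose:formula} to obtain \eqref{explicit:Pi(z)}. The only blemish is a typo in your denominator estimate, whose lower bound should be $i\mu + (j-i)\alpha$ rather than $i\mu + (c-i)\alpha$; this does not affect the conclusion, since all that is needed is that the denominator exceeds $i\mu$.
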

\begin{proof}
The proof of Lemma~\ref{lemma32} proceeds in the same manner as used in Lemma~\ref{lemma1:lem}. 
We prove (\ref{explicit:Pi(z)}) using mathematical induction. Indeed, substituting 
\[
	\widehat{\Pi}_{i-1} (z) = z^{c-i+1} \left( \sum_{j=0}^{i-1} \frac{A_{i-1,j}} { \hat{z}_j - z} \right),
\]
into (\ref{Pi(z):eq}), deleting $(z-z_i)$ from both sides and rearranging the result, we obtain (\ref{explicit:Pi(z)}). It should be noted that (\ref{decompose:formula}) is used to obtain (\ref{explicit:Pi(z)}).
\end{proof}
\begin{remark}
It should be noted that $\pi_{i,j}$ ($j \geq i$) is expressed in terms of $\pi_{0,0}$. Furthermore, $\pi_{i+1,i+1}$ is expressed in terms of $\pi_{i,j}$ ($j = i+1,i+2,\dots$) and then in terms of $\pi_{0,0}$ via the balance of the flows in and out the set of states $\{(k,j); 0 \leq k \leq i, j \geq k \}$, i.e., 
\[
	(i+1)\mu \pi_{i+1,i+1} = \sum_{j=i+1}^\infty \min(j-i,c-i) \alpha \pi_{i,j}.
\]
\end{remark}
Taking the derivative of (\ref{Pi(z):eq}) $n$ times yields
\begin{eqnarray}
	\widehat{\Pi}_i^{(n)} (1) &  = &  \frac{c -i +1}{c-i} \widehat{\Pi}_{i-1}^{(n)} (1)  + \frac{ n (\lambda -\mu - (c-i) \alpha) \widehat{\Pi}_i^{(n-1)} (1) + n(n-1) \lambda  \widehat{\Pi}_i^{(n-2)} (1)   }{(c-i)\alpha} \nonumber \\
                                    &    & \mbox{} + \frac{\lambda \pi_{i,c-1} (c -i +2 -n)_n   - i \mu \pi_{i,c} (c-i + 1- n )_n   }{(c-i)\alpha},
\end{eqnarray}
which is a recursive formula to compute all the factorial moments $\widehat{\Pi}_i^{(n)} (1)$ ($n \in \bbN$). It should be noted that $\widehat{\Pi}_i^{(0)} (1) = \widehat{\Pi}_i (1)$ and $\widehat{\Pi}_{i-1}^{(n)} (1)$ ($n \in \bbN$) are already known.

Finally, the case $i = c $ needs some special treatment. 
Balance equations read as follows. 
\begin{eqnarray}
(\lambda + c\mu ) \pi_{c,c} & = & \alpha \pi_{c-1,c} + c\mu \pi_{c,c+1}, \qquad j = c, \\
\label{picj:eq}
(\lambda + c\mu ) \pi_{c,j}  & = & \alpha \pi_{c-1,j} + \lambda \pi_{c,j-1} + c\mu \pi_{c,j+1}, \qquad j \geq c + 1.
\end{eqnarray}
Defining  
\[
	\widehat{\Pi}_c ( z) = \sum_{j = c}^\infty \pi_{c,j} z^{j-c}, 
\]
we have $\Pi_c (z) = \widehat{\Pi}_c ( z)$.
Multiplying (\ref{picj:eq}) by $z^{j-c}$ and summing up over $j \geq c$ yields
\begin{equation}\label{Pc(z):eq}
	(\lambda + c\mu) \widehat{\Pi}_c(z) = \frac{\alpha}{z} \widehat{\Pi}_{c-1} (z) + \lambda z \widehat{\Pi}_c(z) + \frac{c\mu}{z} (\widehat{\Pi}_c(z) - \pi_{c,c}),
\end{equation}
leading to 
\[
	f_c(z) \widehat{\Pi}_c(z) = \alpha \widehat{\Pi}_{c-1}(z) - c\mu \pi_{c,c},
\]
or equivalently,
\begin{align}\label{pi_cz:eq}
	\widehat{\Pi}_c(z) & = \frac{ \alpha \widehat{\Pi}_{c-1} (z) - c \mu \pi_{c,c}  }{ z- 1} \frac{1}{c\mu - \lambda z} =  \frac{ \alpha \left( \widehat{\Pi}_{c-1} (z) - \widehat{\Pi}_{c-1} (1) \right)  }{ z- 1} \frac{1}{c\mu - \lambda z}.
\end{align}
where $	f_c(z) = (\lambda + c\mu) z - \lambda z^2 - c\mu$ and $ \alpha \widehat{\Pi}_{c-1} (1) = c \mu \pi_{c,c}$ is used in the second equality of (\ref{pi_cz:eq}).

It should be noted that the numerator and denominator of the first term in the right hand side of (\ref{pi_cz:eq}) vanish at $z=1$. Thus, applying l'Hopital's rule, we obtain 
\[
	\widehat{\Pi}_c(1) = \frac{ \alpha \widehat{\Pi}_{c-1}^\prime (1) }{ c\mu - \lambda}.
\]
Substituting $\widehat{\Pi}_{c-1}(z)$ in the form of (\ref{explicit:Pi(z)}) with $i=c-1$ into (\ref{pi_cz:eq}), we obtain 
\begin{equation}\label{pi_c(z):decompose}
	\widehat{\Pi}_c(z) = \sum_{j=0}^c \frac{A_{c,j}}{\hat{z}_j - z},
\end{equation}
where 
\[
	\hat{z}_c = \frac{c\mu}{\lambda}, \qquad A_{c,j} = \frac{A_{c-1,j}}{\hat{z}_c - 1}, \qquad j = 0,1,\dots,c-1, \qquad A_{c,c}= - \sum_{j=0}^{c-1} \frac{A_{c-1,j} \hat{z}_j }{f_c(\hat{z}_j)}.
\]

Taking the derivative of (\ref{Pc(z):eq}) $n$ times and rearranging the result and then applying l'Hopital's rule yields,

\[
	\Pi^{(n)}_c (1) = \frac{ \alpha \Pi_{c-1}^{(n+1)} (1) + \lambda n (n-1) \Pi_c^{(n-2)} (1) + 2\lambda n \Pi_c^{(n-1)} (1)   }{ (n+1) (c\mu - \lambda)}.   
\]   
It should be noted that $\Pi_{c-1}^{(n+1)} (1)$ and $\Pi_c^{(0)} (1) = \Pi_c (1)$ are already given.

At this moment, all the probabilities $\pi_{i,j}$ ($j \leq c$)  and the generating functions $\widehat{\Pi}_i (z)$ ($i=0,1,\dots,c$) are expressed in terms of $\pi_{0,0}$ which is uniquely determined using the 
following normalization condition. 
\[
	\Pi_0 (1) +  \Pi_1 (1) + \cdots + \Pi_c (1) = 1.
\]

\begin{remark}
Since explicit expressions for the generating functions are available, we can easily obtain explicit results for the factorial moments and the joint stationary distribution using $A_{i,j}$ ($0 \leq i\leq j \leq c$) and $\hat{z}_i$ ($i=0,1,\dots,c$). In particular, it follows from (\ref{explicit:Pi(z)}) and (\ref{pi_c(z):decompose}) that $\pi_{i,j}$ ($i=1,2,\dots,c$, $j \geq c$) is a linear combination of $1/\hat{z}_k^j$ ($k=0,1,\dots,i$).
\end{remark}

\begin{remark}
It was shown in~\cite{Levy_Yechiali:76} that $\hat{z}_i$ ($i=1,2,\dots,c-1$) are distinct. In the above analysis we implicitly assume that $\hat{z}_0 \neq \hat{z}_i$ ($i=1,2,\dots,c$) and $\hat{z}_c \neq \hat{z}_i$ ($i=0,1\dots,c-1$). 
In case where there exists some $i$ such that $\hat{z}_0 = \hat{z}_i$ ($i=1,2, \dots,c-1$) or (and) some $j$ such that $\hat{z}_j = \hat{z}_c$, we still have explicit expressions for the generating functions and the the joint stationary distribution after some minor modification. In particular, if $\hat{z}_0 = \hat{z}_i$ for some $i=1,2,\dots,c-1$, $\pi_{i,j}$ is a linear combination of $1/\hat{z}_k^j$ ($k=0,1,\dots,i-1$) and $j/\hat{z}_0^j$.
\end{remark}


\begin{remark}
The computational complexity of the generating function approach is $O(c^2)$. Indeed, we need to calculate $A_{i,j}$ and $\pi_{i,j}$ ($i\leq j, 0 \leq j \leq c$) in the following order:
\[
	(0,0) \rightarrow (0,1) \rightarrow \cdots \rightarrow (0,c) \rightarrow (1,1) \rightarrow (1,2) \rightarrow \cdots \rightarrow (1,c) \rightarrow \dots \rightarrow (c,c).
\] 
As a result, the complexity is of order $\sum_{i=0}^c i = c(c+1)/2 = O(c^2)$. 
It should be noted that the recursive procedure for $\pi_{i,j}$ ($0 \leq i \leq j \leq c$) is numerically stable since it involves only positive numbers, i.e., $a^{(i)}_j$ and $b^{(i)}_j$.
\end{remark}

\subsection{Conditional stochastic decomposition}\label{Conditional_Decomposition:sec}
We have derived the following result.
\begin{eqnarray*}
 \Pi_c(z) & = & \frac{ \alpha ( \Pi_{c-1} (z) - \pi_{c-1,c-1})    -c\mu \pi_{c,c}     }{ (z-1) (c\mu -\lambda z)  }, \\ 
 \Pi_c(1) & = & \frac{ \alpha \Pi_{c-1}^\prime (1)   }{c\mu - \lambda}.
\end{eqnarray*}
Let $Q^{(c)}$ denote the conditional queue length given that all $c$ servers are busy in the steady state, i.e.,
\[
	\mathbb{P} (Q^{(c)} = i) = \mathbb{P} (N(t) = i + c \ | \ C(t) = c).
\]
Let $P_c(z)$ denote the generating function of  $Q^{(c)}$. It is easy to see that 
\begin{align*}
	P_c(z) & = \frac{\Pi_c(z)}{\Pi_c(1)} \\
            & = \frac{  \alpha ( \Pi_{c-1} (z) - \pi_{c-1,c-1})    -c\mu \pi_{c,c}     }{\alpha \Pi_{c-1}^\prime(1) (z-1)} \frac{1- \rho}{1 -\rho z}\\
	        & = \frac{\Pi_{c-1} (z) - \Pi_{c-1}(1)}{\Pi_{c-1}^\prime(1)(z-1)}   \frac{1- \rho}{1 -\rho z} \\
	        & = \frac{\sum_{j=1}^\infty \pi_{c-1,c-1+j} (z^j - 1)  }{\Pi_{c-1}^\prime(1)(z-1)} \frac{1- \rho}{1 -\rho z} \\
	        & = \frac{\sum_{j=1}^\infty \pi_{c-1,c-1+j} \sum_{i=0}^{j-1} z^i  }{\Pi_{c-1}^\prime(1)} \frac{1- \rho}{1 -\rho z} \\
	        & = \frac{ \sum_{i=0}^\infty \left(\sum_{j = i+1}^\infty \pi_{c-1,c-1+j} \right)  z^i }{\Pi_{c-1}^\prime(1)} \frac{1- \rho}{1 -\rho z},
\end{align*}
where we have used $c\mu \pi_{c,c} = \alpha (\Pi_{c-1} (1) - \pi_{c-1,c-1}) $ in the second equality.

It should be noted that $(1-\rho)/(1-\rho z)$ is the generating function of the number of waiting jobs 
in the conventional M/M/$c$ system without setup times under the condition that $c$ servers are busy. 
We denote this random variable by $Q^{(c)}_{ON-IDLE}$. It should be noted that $Q^{(c)}_{ON-IDLE}$ can also be interpreted as the number of jobs in the M/M/1 queue without vacation where the arrival rate and the service rate are $\lambda$ and $c\mu$, respectively.   
We give a clear interpretation for the generating function
\[
	\frac{ \sum_{i=0}^\infty \left(\sum_{j = i+1}^\infty \pi_{c-1,c-1+j} \right)  z^i }{\Pi_{c-1}^\prime(1)}.
\]
For simplicity, we define  
\begin{eqnarray*}
	p_{c-1,i} =   \frac{\sum_{j = i+1}^\infty \pi_{c-1,c-1+j}}{\Pi_{c-1}^\prime(1)}, \qquad i \in \bbZ_+.
\end{eqnarray*}
We have
\[
	\sum_{j=i+1}^\infty \pi_{c-1,c-1 + j} = \mathbb{P} ( N(t) - C(t) > i \ | \ C(t) =  c-1) \mathbb{P} (C(t) = c-1),
\]
and 
\[
	\Pi_{c-1}^\prime(1) = \mathbb{E} [N(t) - C(t) \ | \ C(t) = c-1]  \mathbb{P} (C(t) = c-1).
\]
Thus, we have 
\[
	p_{c-1,i} = \frac{\mathbb{P} ( N(t) - C(t) > i \ | \ C(t) = c-1)}{ \mathbb{E} [N(t) - C(t) \ | \ C(t) = c-1] }.
\]

It should be noted that $N(t)-C(t)$ is the number of jobs in the system that are waiting for the last server (in setup mode) to be active. Thus, $p_{c-1,i}$ ($i=0,1,2,\dots$) represents distribution of the number of waiting customers in front of an arbitrary waiting customer (not being served) under the condition that $c-1$ servers are active and the last server is in setup mode (see Burke~\cite{Burke}). Let $Q_{Res}$ denote the random variable with the distribution $p_{c-1,i}$ ($i=0,1,2,\dots$). Our decomposition result is summarized as follows. 
\begin{equation}\label{conditional_decomposition}
	Q^{(c)} \,{\buildrel d \over =}\  Q^{(c)}_{ON-IDLE}  +  Q_{Res}. 
\end{equation}
We observe that $Q_{Res}$ represents the number of extra jobs due to the setup time.

\begin{remark}
The conditional decomposition (\ref{conditional_decomposition}) is not explicit in the sense that $Q^{(c)}_{ON-IDLE}$ is not an explicit random variable. However, it is useful for understanding the behavior of the system. This situation is the same in the decomposition of M/M/1 queue with working vacation (M/M/1/WV) by Servi and Finn~\cite{Servi02}. The reason for the ``implicit" stochastic decomposition is that the service is continued during the working vacation.
\end{remark}

\begin{remark}
Tian et al.~\cite{Tian99,Tian03b,Tian_Zhang} obtain a similar result for a multiserver model with vacation. However, the random variable with the distribution $p_{c-1,i}$ here is not given a clear physical meaning in~\cite{Tian99,Tian03b,Tian_Zhang}.
\end{remark}

\section{Matrix Analytic Methods}\label{matrix_ana:sec}
In this section we present an analysis of the model based on a quasi-birth-and-dearth process (QBD) approach. 
\subsection{QBD formulation}
The infinitesimal of $\{X(t) \}$ is given by 
\[
Q  = \left(
\begin{array}{ccccc}
     Q^{(0)}_0 		&    Q^{(0)}_{1}    &       O & O      &      \cdots     \\
     Q^{(1)}_{-1} 	&    Q^{(1)}_0       &    Q^{(1)}_1  & O      &      \cdots  \\
	 O      &    Q^{(2)}_{-1}     &   Q^{(2)}_0    & Q^{(2)}_1    &       \cdots    \\
	 O      &    O       &    Q^{(3)}_{-1}  & Q^{(3)}_0      &     \cdots    \\
     \vdots &    \vdots  &   \vdots & \vdots &      \ddots    \\
\end{array}
\right),
\]
where $O$ denotes the zero matrix with an appropriate dimension. A Markov chain with this type of block tridiagonal matrix is called a level dependent quasi-birth-and-death process for which 
some efficient algorithms are available~\cite{bright95,matrix-conf}. 
 The block matrices $Q^{(i)}_{-1}$ ($i \geq c+1$), $Q^{(i)}_0$ ($i \geq c$) and $Q^{(i)}_1$ ($i \geq c$) are independent of $i$ and are explicitly given as follows.
\[
	Q^{(i)}_{-1} = Q_{-1}  = diag (0, \mu, \dots, c \mu), \qquad  Q^{(i)}_1 = Q_{1}  = \lambda I.
\]
\begin{align*}
Q^{(i)}_0 = Q_0 & = \left(
\begin{array}{cccccc}
-q_0 &         c \alpha  &               0 & \cdots &            \cdots &       0 \\
     0      &       -q_1 &          (c-1) \alpha & \ddots &                   &  \vdots \\
          0 &                0 &          -q_2 & \ddots &            \ddots &  \vdots \\
          \vdots &          \ddots &          \ddots & \ddots &            \ddots & 0 \\
          \vdots &                 &          \ddots & \ddots & -q_{c-1} &   \alpha \\
               0 &          \cdots &          \cdots &      0 &           0 & -q_{c}
\end{array}
\right),
\end{align*}
where $q_j = \lambda + (c-j) \alpha + j \mu$.
Furthermore, $Q^{(i)}_2$ ($i \leq c$), $Q^{(i)}_1$ ($i \leq c-1$) and $Q^{(i)}_0$ ($i \leq c$) are $(i+1) \times i$, $(i+1) \times (i+1)$ and $(i+1) \times (i+2)$ matrices whose contents are given as follows. 
\begin{align*}
	Q^{(i)}_1  & =  \left(  
	\begin{array}{ccccc}
		\lambda &    0  &   \cdots  & 0 &  0 \\
		0      &       \lambda  &    \ddots & \vdots &    \vdots \\
        \vdots &     \ddots   &     \ddots & 0        &  0 \\
        0 &          \cdots &         0 & \lambda &  0 
	\end{array}
	\right), \qquad 
		Q^{(i)}_{-1}  = \left(
	\begin{array}{cccccc}
		0 &         0  &        \cdots  & \cdots &            0  \\
		0      &       \mu &          \ddots & \ddots &     \vdots            \\
        0 &                0 &           & \ddots &            \vdots  \\
        \vdots &          \ddots &          \ddots & \ddots &            0  \\
        \vdots &                 &          \ddots & \ddots &  (i-1) \mu   \\
        0        &          \cdots &          \cdots &      0 &  i \mu 
	\end{array}
	\right),	\\
	Q^{(i)}_0 & = \left(
	\begin{array}{cccccc}
		-q^{(i)}_0 &         i \alpha  &               0 & \cdots &            \cdots &       0 \\
		0      &       -q^{(i)}_1 &          (i-1) \alpha & \ddots &                   &  \vdots \\
        0 &                0 &          -q^{(i)}_2 & \ddots &            \ddots &  \vdots \\
        \vdots &          \ddots &          \ddots & \ddots &            \ddots & 0 \\
        \vdots &                 &          \ddots & \ddots & -q^{(i)}_{i-1} &   \alpha \\
        0        &          \cdots &          \cdots &      0 &           0 & -q^{(i)}_{i}
	\end{array}
	\right),		
\end{align*}
where $q^{(i)}_j = (i-j) \alpha + j \mu$ ($j=0,1,\dots,i$).
%
%
Let 
\begin{align*}
\vc{\pi}_i & = (\pi_{0,i}, \pi_{1,i}, \dots, \pi_{\min(i,c),i}), \qquad i \in \bbZ_+, \qquad \vc{\pi}  = (\vc{\pi}_0,\vc{\pi}_1,\dots).
\end{align*}
The stationary distribution $\vc{\pi}$ is the unique solution of 
\[
	\vc{\pi} Q = \vc{0}, \qquad \vc{\pi} \vc{e} = 1, 
\]
where $\vc{0}$ and $\vc{e}$ represent a row vector of zeros and a column vector of ones with an appropriate size. According to the matrix analytic method~\cite{Neuts81,Rdef}, we have
\[
	\vc{\pi}_i = \vc{\pi}_{i-1} R^{(i)}, \qquad i \in \bbN, 
\]
and $\vc{\pi}_0$ is the solution of the boundary equation 
\[
	\vc{\pi}_0 ( Q^{(0)}_0 + R^{(1)} Q^{(1)}_{-1} ) = \vc{0}, \qquad \vc{\pi}_0 ( I + R^{(1)} + R^{(1)} R^{(2)} + \cdots ) \vc{e} = 1. 
\]
Here $\{ R^{(i)}; i \in \bbN \}$ is the minimal nonnegative solution of the following equation 
\begin{equation}\label{ri_ri_plus:eq}
	Q^{(i-1)}_1 + R^{(i)} Q^{(i)}_0 +  R^{(i)} R^{(i+1)} Q^{(i+1)}_{-1} = O.
\end{equation}  

\subsection{Homogeneous part}
\subsubsection{The rate matrix}
It should be noted that $Q^{(i-1)}_1 = Q_1$ ($i \geq c$), $Q^{(i)}_0 = Q_0$ ($i \geq c$) and $Q^{(i)}_{-1} = Q_{-1}$ ($i \geq c+1$).
Thus, we have $R^{(i)} = R$ for $i \geq c+1$ and $R$ is the minimal nonnegative solution of the following equation.
\begin{equation}\label{homogeneous:Rmatrix}
	Q_{1} + R Q_{0} +  R^2 Q_{-1} = O.
\end{equation}  
We know that $R$ is an upper diagonal matrix, i.e., $R(i,j) = r_{i,j}$ ($j \geq i$) and $R(i,j) = 0$ if $j < i$ because $Q_{-1}, Q_0, Q_1$ are upper diagonal matrix. A similar structure is also found in the model in~\cite{Neuts81,Perel_Yechiali}. Furthermore, this type of QBD is considered in more general contexts in~\cite{Benny_Johan11}. Comparing the diagonal part of the quadratic equation above, we obtain 
\begin{equation}
\lambda -  (\lambda + i\mu + (c-i)\alpha ) r_{i,i}  + i \mu r_{i,i}^2 = 0, \qquad i = 0,1,\dots,c-1,c.
\end{equation}
which has two roots. Because $R$ is the minimal nonnegative solution of (\ref{homogeneous:Rmatrix}), we must choose the smallest root for $r_{i,i}$. Thus, we have 
\begin{equation}
r_{i,i} = \frac{ \lambda + i \mu + (c-i) \alpha - \sqrt{ (\lambda + i\mu + (c-i) \alpha)^2 - 4 i \lambda \mu   }               }{ 2 i \mu}, \qquad i = 1,2,\dots,c-1,
\end{equation}
and 
\[
	r_{0,0} = \frac{\lambda}{\lambda + c \alpha}, \qquad r_{c,c} = \frac{\lambda}{c \mu} < 1.
\]
Next, we shift to the non-diagonal elements, i.e., $r_{i,j}$ ($j > i$).
Comparing the $(i,j)$ element in the quadratic equation, we obtain 
\[
  (c-j+1) \alpha r_{i,j-1} - (\lambda + (c-j) \alpha + j \mu) r_{i,j} + j \mu \sum_{k=i}^j r_{i,k} r_{k,j} = 0.
\]
For $j = i+1$, we obtain 
\[
(c-i) \alpha r_{i,i} - (\lambda + (c-i-1) \alpha + (i+1) \mu) r_{i,i+1} + (i+1) \mu (r_{i,i} r_{i,i+1} + r_{i,i+1} r_{i+1,i+1}  ) = 0.
\]
Thus, 
\begin{align*}
	r_{i,i+1}  = & \frac{ (c-i) \alpha r_{i,i}   }{  \lambda + (c-i-1) \alpha + (i+1) \mu - (i+1) \mu (r_{i,i} + r_{i+1,i+1})  },  \\
               & i = 0,1,\dots,c-1.
\end{align*}
%
%
%
It should be noted that the right hand side contains only known quantities obtained in previous steps. 
For the general case, we have 
\[
	r_{i,j} = \frac{(c-j+1)\alpha r_{i,j-1} + j\mu \sum_{k=i+1}^{j-1} r_{i,k} r_{k,j}   }   {  \lambda + (c-j) \alpha + j\mu  - j \mu (r_{i,i} + r_{j,j})  }, \qquad j > i.
\]
We can rewrite this formula as follows.
\begin{align*}
	r_{i,i+h+1} = & \frac{(c-i-h)\alpha r_{i,i+h} + (i+h+1) \mu \sum_{k=i+1}^{i+h} r_{i,k} r_{k,i+h+1}   }   {  \lambda + (c-i-h-1) \alpha + (i+h+1)\mu  - (i+h+1) \mu (r_{i,i} + r_{i+h+1,i+h+1})  }, \\
	                 & i = 0,1,\dots,c-h-1, \qquad h = 0,1,\dots,c-1.
\end{align*}
From these recursive formulae, we can calculate the elements of the rate matrix from the diagonal part and then the upper diagonal parts consequently. 

\subsubsection{Non-homogeneous part}\label{Non-homogeneous:part}
Because $R^{(i)} = R$ ($i = c+1,c+2,\dots$) which has been explicitly obtained, we only need to find $R^{(i)}$ ($i=c,c-1,\dots,1$).
Indeed, $R^{(i)}$ ($i=c,c-1,\dots,1$) is easily obtained using the following backward formula. 
\[
	R^{(i)}  = - Q^{(i-1)}_1 \left( Q^{(i)}_0  +  R^{(i+1)} Q^{(i+1)}_{-1} \right)^{-1},  \qquad i = c,c-1,\dots,1.
\]
This is equivalent to solving the following system of linear equations. 
\[
	R^{(i)} \left( Q^{(i)}_0  +  R^{(i+1)} Q^{(i+1)}_{-1} \right) = - Q^{(i-1)}_1 ,  \qquad i = c,c-1,\dots,1.
\]

Due to the special structure of the rate matrices, i.e., they are upper diagonal matrices, this system of linear equations can be efficiently solved as follows. In this case, we need to solve the following equation  
\begin{equation}\label{backward:eq}
	X A = - Q^{(i-1)}_0, 
\end{equation}
where $A = Q^{(i)}_0  +  R^{(i+1)} Q^{(i+1)}_{-1} $ is an upper diagonal matrix of size $(i+1) \times (i+1)$ and and $X$ is also an upper diagonal matrix of size $i \times (i+1)$ matrix. 
Let $\vc{x}_j = (0,0,\dots,x_{j,j}, x_{j,j+1},\dots,x_{j,i})$ ($j=0,1,\dots,i-1$) denote the $j$-th row vector of $X$. The above equation is equivalent to 
\[
	\vc{x}_j A = (0,0,\dots, - \lambda, 0, \dots, 0), \qquad j = 0,1,\dots,i-1,
\]
where the $- \lambda$ is the $(j+1)$-th entry of the vector in the right hand side. The solution of this equation is given by 
\begin{align}\label{recursive:eq_xij}
	x_{j,j} & = - \frac{\lambda}{a_{j,j}}, \qquad
	x_{j,l}  = - \frac{\sum_{k=j}^{l-1} x_{j,k} a_{k,l}}{a_{l,l}}, \qquad l = j+1,j+2,\dots,i,
\end{align}
where $a_{i,j}$ is the $(i,j)$ entry of $A$.

\begin{remark}
The computational complexity of (\ref{recursive:eq_xij}) is $i-j$ and thus, the computational complexity for (\ref{backward:eq}) is $O(i^2) = \sum_{j=0}^{i-1} (i-j)$ instead of $O(i^3)$ by a direct inversion of $X$. Therefore, the computational complexity for obtaining the rate matrices $R^{(i)}$ ($i = 1,2,\dots,c$) in the non-homogeneous part is of the order of $O(c^3)$ because $\sum_{j=1}^c i^2 = O(c^3)$. It should be noted that if we solve (\ref{backward:eq}) by a direct inversion of the $X$, the computational complexity for $R^{(i)}$ is $i^3$ and thus the computational complexity for all the rate matrices in the non-homogeneous part ($R^{(i)}$, $i = 1,2,\dots,c$) is $O(c^4)$.  
\end{remark}

\subsection{The $G$-matrix}\label{homo_part_gmatrix}
In this section, we derive explicit expressions for the $G$-matrix of our QBD process. It should be noted that $G$-matrix records the first passage probabilities to one level left in the homogeneous part (i.e., the number of jobs in the system is greater than $c$). These probabilities are also obtained using the recursive renewal reward approach by~\cite{Gandhi13,Gandhi14}.  The $G$-matrix is the minimal and nonnegative solution of the following equation~\cite{Neuts81}.
\begin{equation}\label{equationforG:eq}
Q_{-1} + Q_0 G + Q_1 G^2 = O.
\end{equation}
From the physical interpretation of $G$, we see that $G$ is also an upper diagonal matrix.  
Using a similar method as in the case of $R$-matrix, we are able to obtain explicit expressions for all the elements of $G$. 
Let $g_{i,j}$ ($i,j=0,1,\dots,c$) denote the $(i,j)$ element of $G$. Comparing the element $(0,0)$ in both sides of (\ref{equationforG:eq}) yields,
\[
	-(\lambda + c\alpha) g_{0,0} + \lambda g_{0,0}^2 = 0. 
\]
Since $0 \leq g_{0,0} \leq 1$, we obtain $g_{0,0} = 0$. Equating the $(i,i)$ ($i = 1,2,\dots,c-1$) elements in both sides of (\ref{equationforG:eq}), we obtain 
\[
	i\mu - (\lambda + (c-i) \alpha + i\mu) g_{i,i} + \lambda g_{i,i}^2 = 0, \qquad i = 1,2,\dots,c-1.
\]
Combining with the condition that $0 \leq g_{i,i} \leq 1$, we obtain 
\[
	g_{i,i} =  \frac{  \lambda + i \mu + (c-i) \alpha - \sqrt{ (\lambda + i \mu + (c-i) \alpha)^2 - 4 i \lambda \mu }   }{2 \lambda},
\]
which is identical to $z_i$. Finally, comparing the $(c,c)$ elements in both sides of (\ref{equationforG:eq}) we obtain 
\[
	c\mu -(\lambda + c\mu )g_{c,c} + \lambda g_{c,c}^2 = 0,
\]
which has two roots $1$ and $\lambda/(c\mu)$. Because $g_{c,c}$ is the minimal solution of this equation, we have $g_{c,c} = \lambda/(c\mu)$.
We have obtained all the diagonal elements of the $G$-matrix. Using the same manner as for $R$-matrix, we also recursively obtain the upper diagonal elements.
First, we obtain the upper diagonal elements $g_{i,i+1}$ ($i=0,1,\dots,c-1$). Indeed, comparing the elements $(i,i+1)$ in both sides of (\ref{equationforG:eq}), we obtain 
\begin{equation}
-q_i g_{i,i+1} + (c-i)\alpha g_{i+1,i+1} + \lambda (g_{i,i} g_{i,i+1} + g_{i,i+1} g_{i+1,i+1}) = 0,
\end{equation}
leading to 
\[
	g_{i,i+1} = \frac{(c-i)\alpha g_{i+1,i+1}}{\lambda + (c-i)\alpha + i \mu - \lambda (g_{i,i} + g_{i+1,i+1})}, \qquad i = 0,1,\dots,c-1.
\]
It should be noted that the quantities in the left hand side are given. Furthermore, comparing elements $(i,j)$ in both sides of (\ref{equationforG:eq}) and rearranging the result, we obtain 
\[
	g_{i,j} = \frac{ (c-i) \alpha + \lambda \sum_{k=i+1}^{j-1} g_{i,k} g_{k,j} }{ \lambda + (c-i)\alpha + i\mu - \lambda (g_{i,i}+g_{j,j})}, \qquad i + 1 < j \leq c.
\]

Once $G$ is given, we obtain other $G^{(n)}$ ($n = 1,2,\dots,c$) matrices using the following backward formula.  
\[
	G^{(n)} =  \left( -Q^{(n)}_0 - Q^{(n)}_1 G^{(n+1)}  \right)^{-1}   Q^{(n)}_{-1} , \qquad n = c,c-1,\dots,1.
\]

\section{Comparison of Several Approaches}\label{comparison_approach:sec}

In this section, we present a comparison between several approaches that can be used to solve our M/M/c/Setup model.
\begin{remark}
We observe that the generating function approach and the matrix analytic method are equivalent in the following sense. Indeed, the homogeneous part in the QBD formulation corresponds to $\widehat{\Pi}_i (z)$ ($i = 0,1,\dots,c$) in the generating function approach. The non-homogeneous part in the matrix analytic method corresponds to the boundary part, i.e., $\{ (i,j); j = i = 0,1,\dots, c, i \leq j  \leq c \}$ in the generating function approach. The advantage of the matrix analytic method is that it directly implies a recursive formula for computing the rate matrix. In our case, the generating function approach yields the exact closed form solution for the joint stationary distribution. 

In general, in case the stationary distribution is exactly obtainable, generating function gives detailed information of the model. On the other hand, when such an analytical solution does not exit, matrix analytic approach provides a unify approach for numerical calculation. 
\end{remark}

\begin{remark}
The matrix analytic method here shares many spirits with the recursive renewal approach. In particular, both methods are based on probabilistic arguments. For example, the quantity $p^L_{i \to d}$ in~\cite{Gandhi13,Gandhi14} is identical to $g_{i,d}$ in Section~\ref{homo_part_gmatrix}. It should be noted that matrix $R$ could be obtained from matrix $G$. From this point of view, the matrix analytic method and the recursive renewal approach are equivalent. The difference in both approaches is that while the matrix analytic method aims at a direct computation of the queue length distribution, the recursive renewal reward approach could be used to obtain any quantity of interest such as the generating function of the queue length. 
\end{remark}

\begin{remark}
Van Houdt and Leeuwaarden~\cite{Benny_Johan11} analyze a more general models, i.e., M/G/1-type and GI/M/1-type Markov chains. In~\cite{Benny_Johan11} there is only one boundary level, i.e. level 0 and thus the focus is put on the explicit expression for the $G$-matrix (or $R$-matrix). In principle, our model falls to the framework of~\cite{Benny_Johan11} by considering the non-homogeneous part as a single macro level. However, if we do so, the computational complexity in the boundary is dominant (i.e. order of $O(c^6)$) while the complexity of matrix $G$ or $R$ is only $O(c^2)$.
It should be noted that special structure of non-homogeneous part is not taken into account in~\cite{Gandhi13,Gandhi14}, thus the computational complexity is also $O(c^6)$.  
\end{remark}

\section{Some Variant Models}
In~\cite{Gandhi13}, some variants of the M/M/$c$/Setup queue are presented and analyzed. The first variant model is the M/M/$c$/Setup/Sleep where a set of $s \leq c$ servers is set to ``sleep" when idle whereas the rest $c-s$ servers are turned off when idle. The characteristic of the sleep state is that it takes a shorter setup time than the off state. The second variant is the M/M/$c$/Setup/Delayoff where a server stays idle for a while after completing a service but not yet having a job to serve. We confirm that the non-homogeneous part (the number of jobs in the system is greater than $c$) has the same structure with that 
of the M/M/$c$/Setup queue in this paper. In comparison with the original model, the boundary part M/M/$c$/Setup/Sleep has the same structure while that of M/M/$c$/Setup/Delayoff is different. The QBD formulation allows to obtain explicit rate matrix for the homogeneous part for both models using which we can recursively obtain the stationary distribution. The generating function approach in this paper can be applied to the M/M/$c$/Setup/Sleep directly while some further modifications are needed for the M/M/$c$/Setup/Delayoff model.

\section{Performance Measures and Numerical Examples}\label{numerical:sec}
\subsection{Performance measures}

Let $\pi_i$ denote the stationary probability that there are $i$ active servers, i.e., $\pi_i = \sum_{j = i}^\infty \pi_{i,j}$. 
Let $\mathbb{E} [A] $ and $\mathbb{E} [S] $ denote the mean number of active servers and that in setup mode, respectively. We have 
\[
	\mathbb{E} [A]  = \sum_{i = 1}^c i \pi_i , \qquad \mathbb{E} [S] = \sum_{i = 0}^c \sum_{j = i}^\infty  \min(j-i, c-i) \pi_{i,j}.
\]
Let $\mathbb{E} [S_r]$ denote the switching rate from OFF to ON in the steady state (mean number of switches from OFF to ON per unit time). We then have 
\begin{align*}
	\mathbb{E} [S_r] &    = \sum_{i=0}^{c-1} \sum_{j = i+1}^\infty \min(c-i,j-i) \alpha \pi_{i,j}   = \sum_{i = 1}^c i \mu \pi_{i,i}, 
\end{align*}
where the second equality is due to the fact that the switching rate from OFF to on is equal to that from ON to OFF in the steady state. 
Furthermore, let $\mathbb{E} [L]$ denote the mean number of jobs in the systems, i.e., 
\[
	\mathbb{E} [L] = \sum_{j=0}^\infty \vc{\pi}_j \vc{e} j,
\]
where $\vc{\pi}_j \vc{e}$ is the probability that there are $j$ customers in the system.

We define a cost function for the model. 
%
\[
	Cost_{on-off} = C_a \mathbb{E} [A] + C_s \mathbb{E} [S].
\]
where $C_a$ and $C_s$ are the cost per time unit for an active server and a server in setup mode, respectively.

For comparison, we also define the cost of the corresponding ON-IDLE model, i.e., M/M/$c$ without setup times. It is easy to see that the power consumption for this model is given as follows. 
\[
	Cost_{on-idle}  = c \rho C_a + c (1 - \rho) C_i,
\]
where $C_i$ is the power consumption of an idle server.

If each time of turning ON and turning OFF a server needs a cost of $C_{sw}$, we could also consider the following cost function~\cite{Maccio13}.
\[
	TotalCost_{on-off} = C_a \mathbb{E} [A] + C_s \mathbb{E} [S] + C_{sw} \mathbb{E} [S_r].
\]

\subsection{Numerical examples}

In this section, we show some numerical examples. It should be noted that some of them are also presented in~\cite{Gandhi10,Gandhi13,Gandhi14}. The numerical results are presented to show the feasibility of our computational procedure. Furthermore, we complement numerical results in~\cite{Gandhi10,Gandhi13,Gandhi14} by taking the switching rate between ON and OFF into account. 

In all the numerical examples, we fix $\mu = 1$, $C_a = C_s = 1$ and $C_i = 0.6 C_a$. The evidence for $C_i = 0.6 C_a$ is that an idle server still consumes about 60\% of its peak processing a job~\cite{Barroso07}. We will investigate the cost function with respect to the setup cost $C_s$ in Section~\ref{cost_vs_setup_cost:sec}. 

All the numerical results in this section are obtained using the matrix analytic method presented in Section~\ref{matrix_ana:sec}. The same numerical results can be also obtained using the procedure presented in Section~\ref{components:sec}.
\subsubsection{Effect of the setup rate}\label{setup_rate:subsec}

Section~\ref{setup_rate:subsec} investigates the effect of the setup rate on the power consumption ($Cost_{on-off}$, $Cost_{on-idle}$) and the mean number of jobs in the system. Figures~\ref{Power_consump_vs_nu_c20:fig} and \ref{Power_consump_vs_nu_c30:fig} represent the power consumption against the setup rate for the case $c=20$ and 30, respectively. We observe that the power consumption decreases as the setup rate increases. For comparison, we also plot the power consumption for the corresponding M/M/$c$ model without setup times. We find that there exists some $\alpha_{\rho, c}$ such that the ON-OFF policy outperforms the ON-IDLE policy for $\alpha > \alpha_{\rho, c}$ while the latter is more power-saving for the case $\alpha < \alpha_{\rho, c}$. Furthermore, $\alpha_{\rho, c}$ increases as $\rho$ increases.

Figures~\ref{Power_consump_vs_nu_switching_cost_c20:fig} and~\ref{Power_consump_vs_nu_switching_cost_c30:fig} investigate the total energy consumption taking into account the switching cost, i.e., $TotalCost_{on-off}$ ($C_{sw} = 1$) against the setup rate $\alpha$ for $\rho = 0.3, 0.5$ and 0.7. We observe that the total power consumption does not always monotonically decreases as the setup rate increases as in~Figures~\ref{Power_consump_vs_nu_c20:fig} and \ref{Power_consump_vs_nu_c30:fig}. This is because when the setup rate $\alpha$ is large the number of switches per time unit increases leading to the increase in the cost function. We observe in the curves of $\rho = 0.5$ that there exist two points $\alpha_{min} $ and $\alpha_{max} $ such that the ON-IDLE policy outperforms the ON-OFF policy for $\alpha < \alpha_{min}$ and $\alpha > \alpha_{max}$. An interesting observation is that three curves for $\rho = 0.3, 0.5$ and 0.7 are the same when the setup rate is extremely low. The reason is that all the servers are in setup mode for almost the time when the setup time is extremely long.

\begin{figure}[htbp]
\begin{tabular}{cc}
\begin{minipage}{0.5\hsize}
\begin{center}
\includegraphics[scale=0.55]{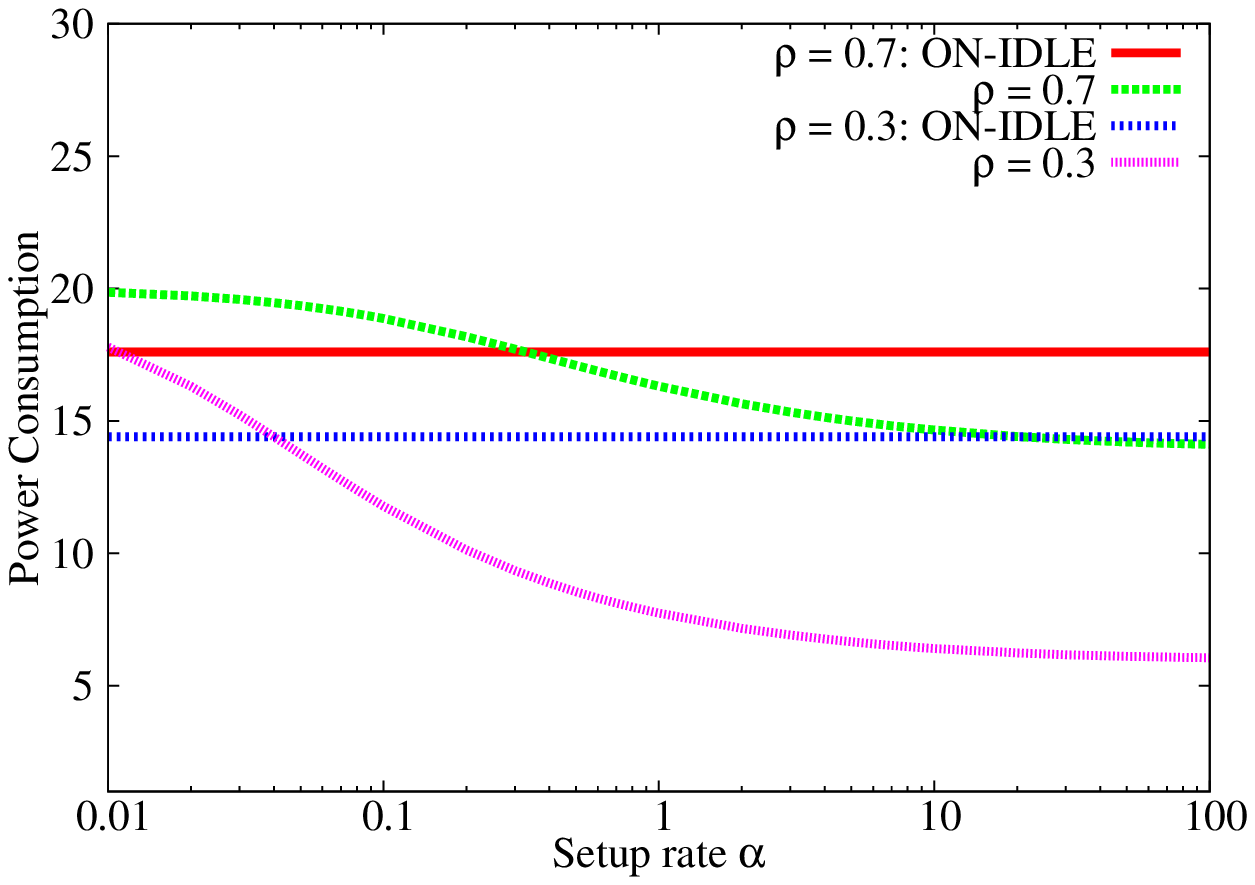}
\caption{Power consumption vs. $\alpha$ ($c=20$)}
\label{Power_consump_vs_nu_c20:fig}
\end{center}
\end{minipage}
\begin{minipage}{0.5\hsize}
\begin{center}
\includegraphics[scale=0.55]{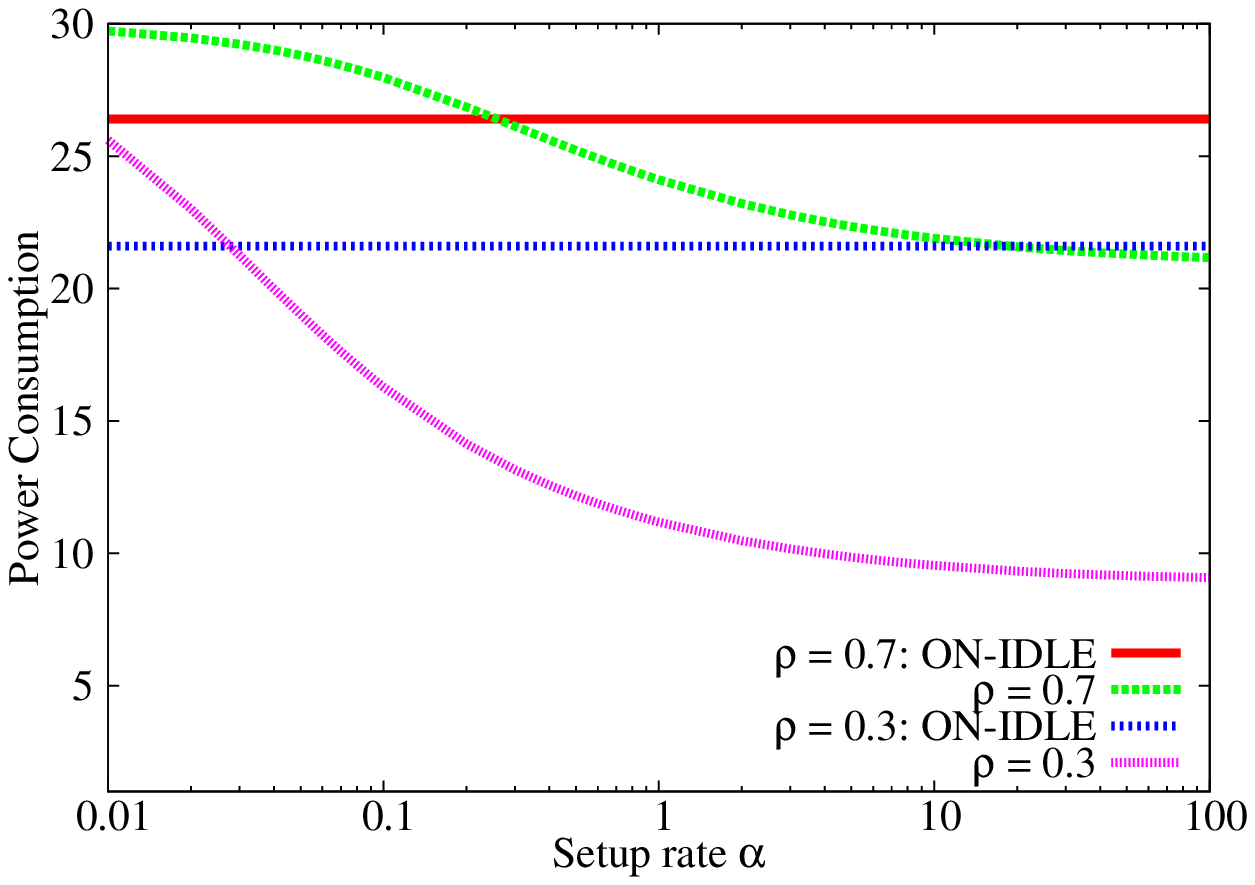}
\caption{Power consumption vs. $\alpha$ ($c=30$)}
\label{Power_consump_vs_nu_c30:fig}
\end{center}
\end{minipage}
\end{tabular}
\end{figure}

\begin{figure}[htbp]
\begin{tabular}{cc}
\begin{minipage}{0.5\hsize}
\begin{center}
\includegraphics[scale=0.55]{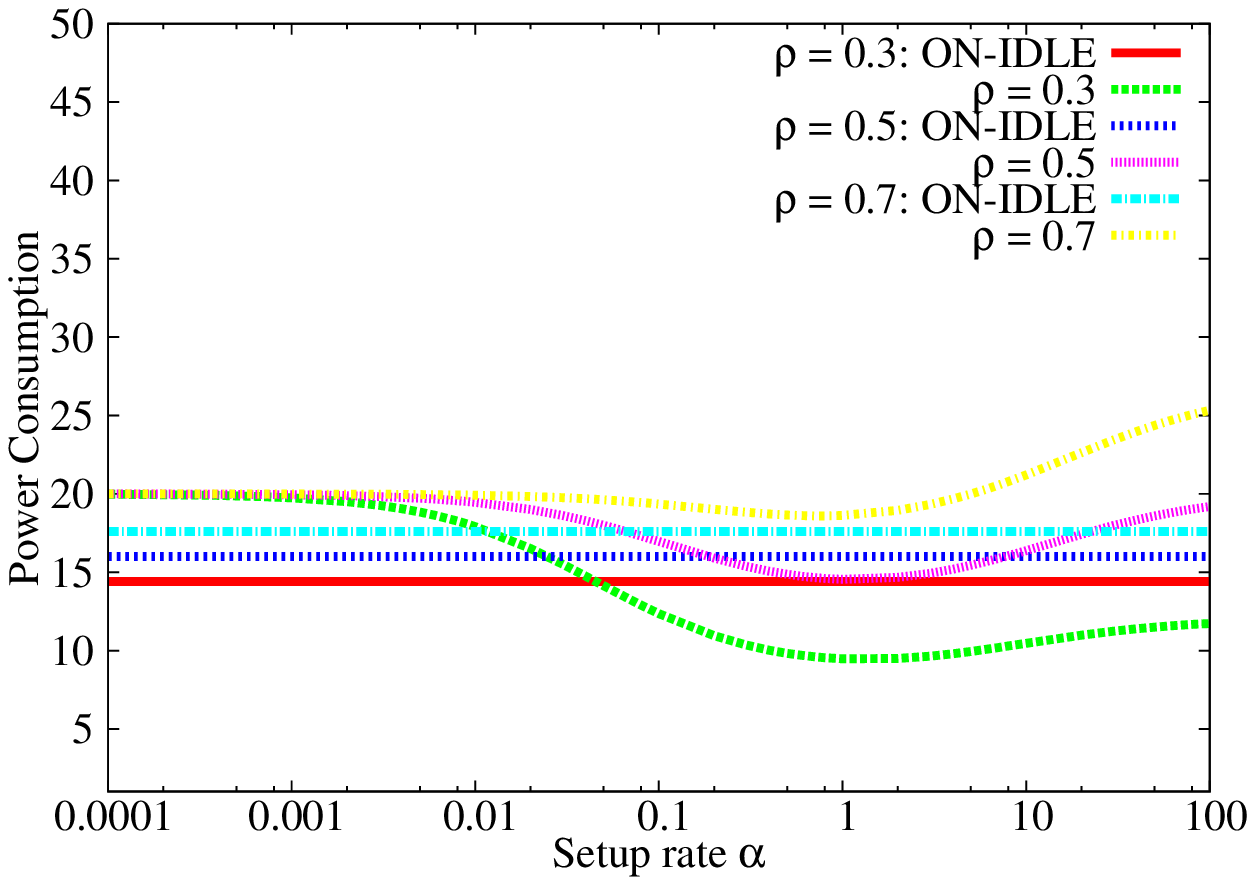}
\caption{Total Pow. consump. vs. $\alpha$ ($c=20$)}
\label{Power_consump_vs_nu_switching_cost_c20:fig}
\end{center}
\end{minipage}
\begin{minipage}{0.5\hsize}
\begin{center}
\includegraphics[scale=0.55]{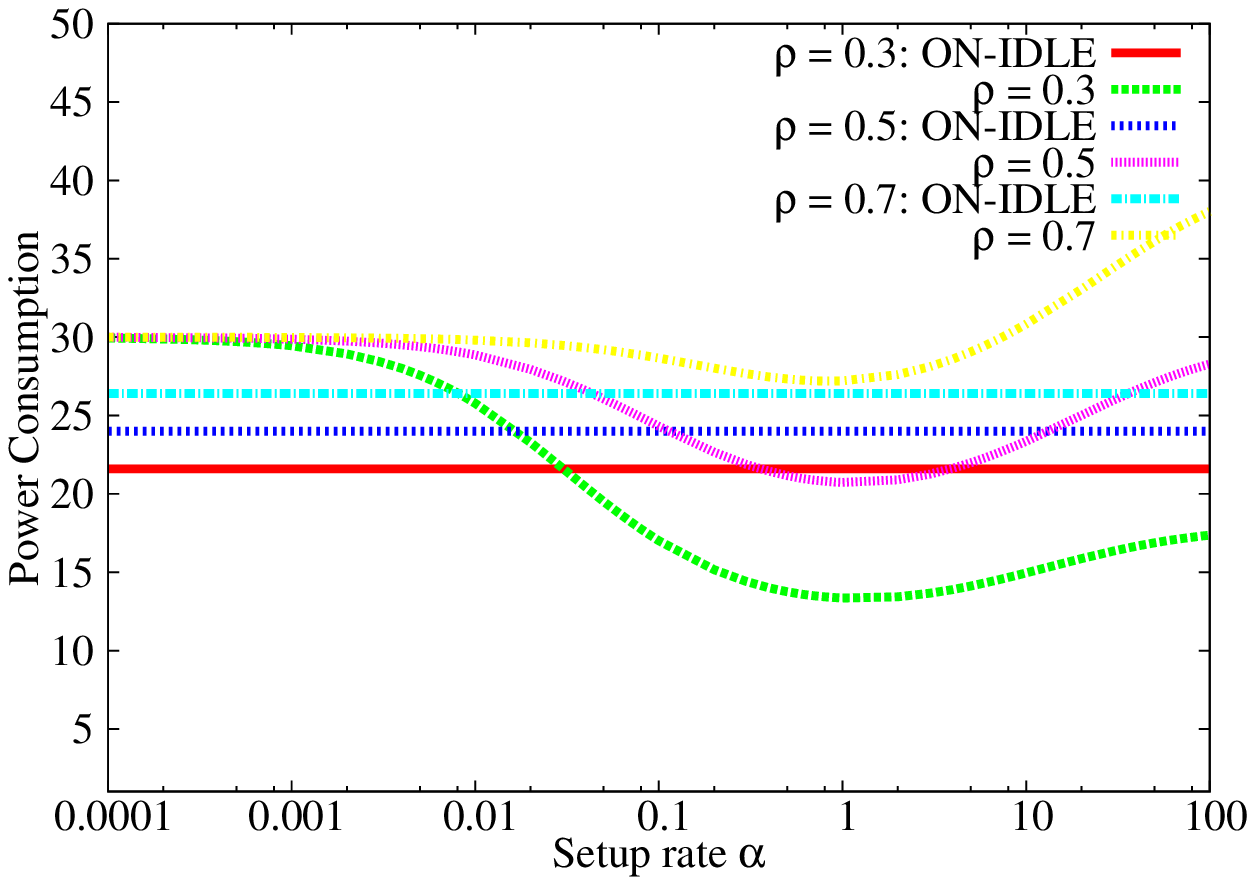}
\caption{Total Pow. consump. vs. $\alpha$ ($c=30$)}
\label{Power_consump_vs_nu_switching_cost_c30:fig}
\end{center}
\end{minipage}
\end{tabular}
\end{figure}

Figures~\ref{System_num_vs_nu_c10:fig} and \ref{System_num_vs_nu_c30:fig} represent the mean number of jobs in the system ($\mathbb{E} [L]$) against the setup rate $\alpha$. We observe that $\mathbb{E} [L]$ decreases as the setup rate increases. We also observe that $\mathbb{E} [L]$ converges to that of  the ON-IDLE model as $\alpha \to \infty$ which agrees with intuition.

\begin{figure}[htbp]
\begin{tabular}{cc}
\begin{minipage}{0.5\hsize}
\begin{center}
\includegraphics[scale=0.55]{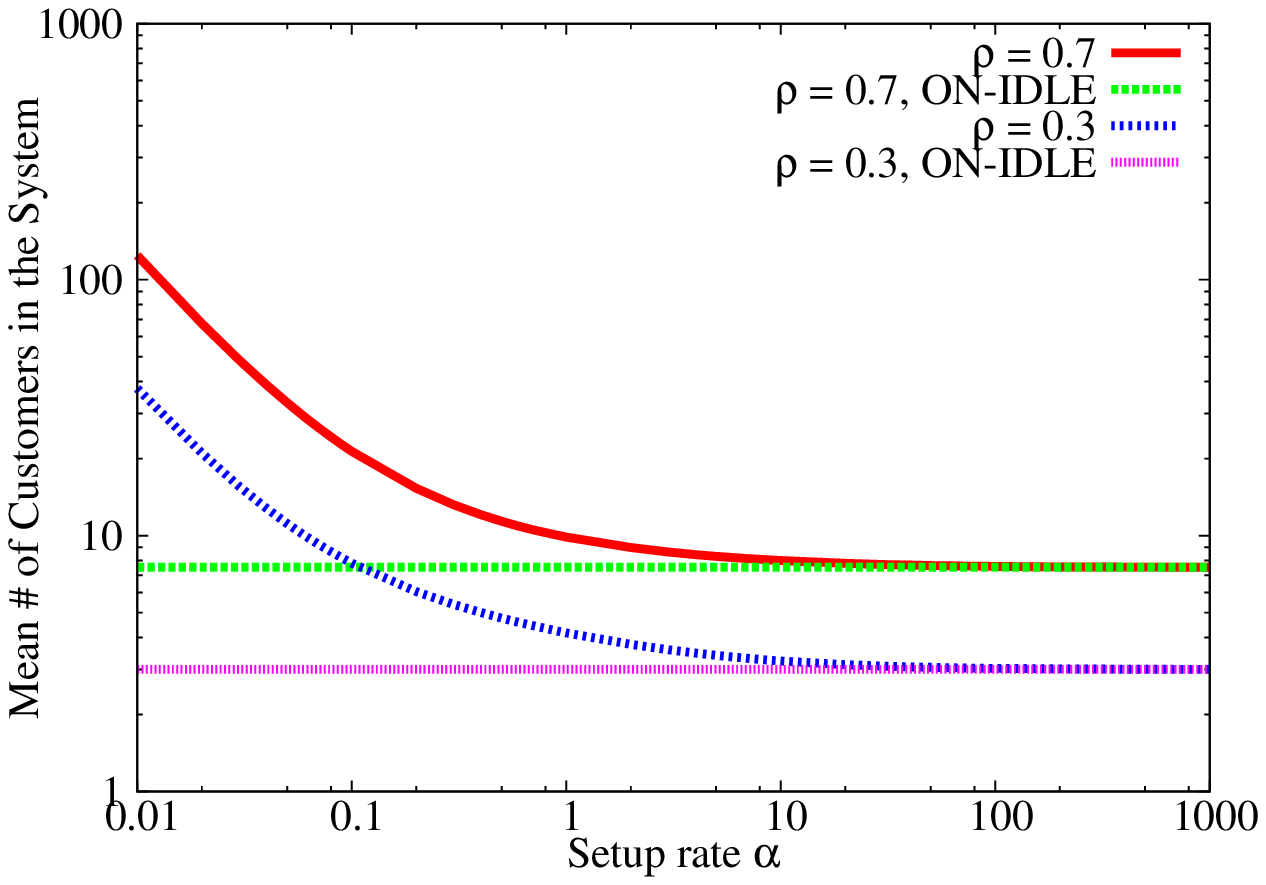}
\caption{$\mathbb{E} [L]$ vs. $\alpha$ ($c=10$)}
\label{System_num_vs_nu_c10:fig}
\end{center}
\end{minipage}
\begin{minipage}{0.5\hsize}
\begin{center}
\includegraphics[scale=0.55]{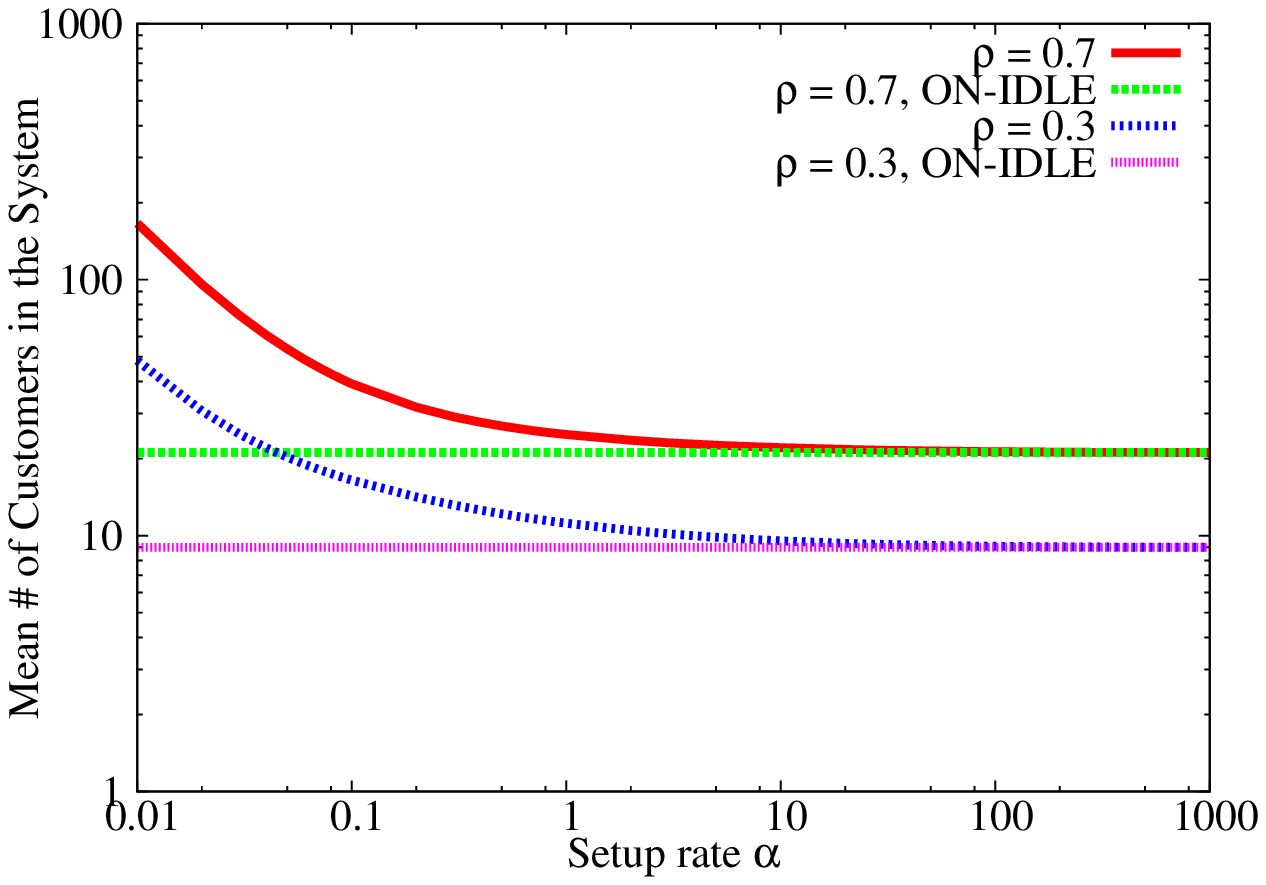}
\caption{$\mathbb{E} [L]$ vs. $\alpha$ ($c=30$)}
\label{System_num_vs_nu_c30:fig}
\end{center}
\end{minipage}
\end{tabular}
\end{figure}

\subsubsection{Effect of the number of servers}
In this subsection, we investigate the effect of the number of servers on the power consumption ($Cost_{on-off}$) while keeping the traffic intensity for each server, i.e., $\rho = \lambda/(c\mu)$ constant. Figures~\ref{Power_consump_vs_serversnum_rho05:fig} and~\ref{Power_consump_vs_serversnum_rho07:fig} represent the case $\rho = 0.5$ and $\rho = 0.7$, respectively. We observe in both figures that the ON-OFF policy is always more power-saving than the ON-IDLE policy for $\alpha = 1$ while the latter always outperforms the former for the case $\alpha = 0.01$. For the case $\alpha = 0.1$, we observe in Figure~\ref{Power_consump_vs_serversnum_rho05:fig} that there exists some $c_{\alpha = 0.1}$ such that the ON-OFF policy outperforms the ON-IDLE one for $c > c_{\alpha = 0.1}$ while the latter is more power-saving than the former for $c < c_{\alpha = 0.1}$. Thus, for $\alpha=0.1$ and $\rho = 0.5$, the ON-OFF policy is more effective than the ON-IDLE system if the scale of the system is large enough.

\begin{figure}[htbp]
\begin{tabular}{cc}
\begin{minipage}{0.5\hsize}
\begin{center}
\includegraphics[scale=0.55]{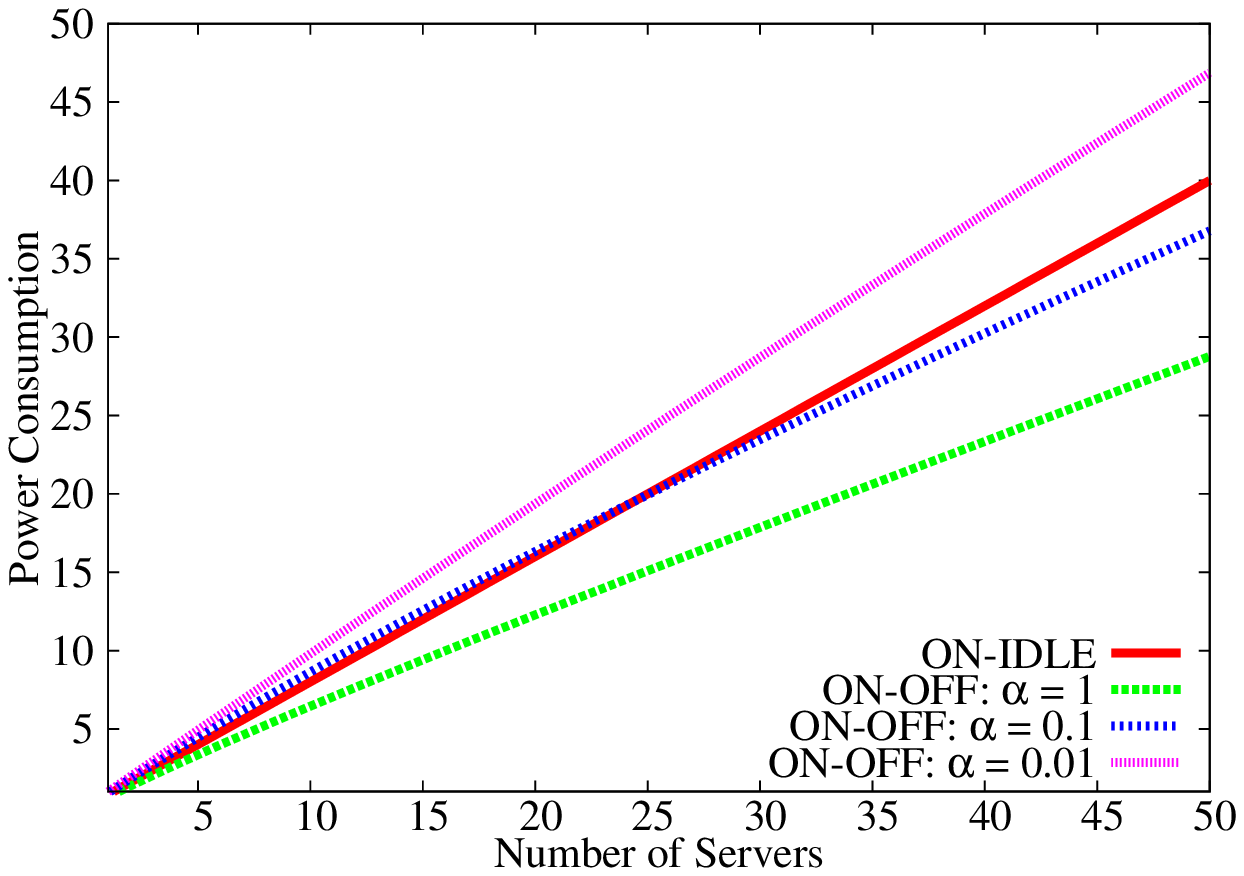}
\caption{Power consumption vs. $c$ ($\rho = 0.5$).}
\label{Power_consump_vs_serversnum_rho05:fig}
\end{center}
\end{minipage}
\begin{minipage}{0.5\hsize}
\begin{center}
\includegraphics[scale=0.55]{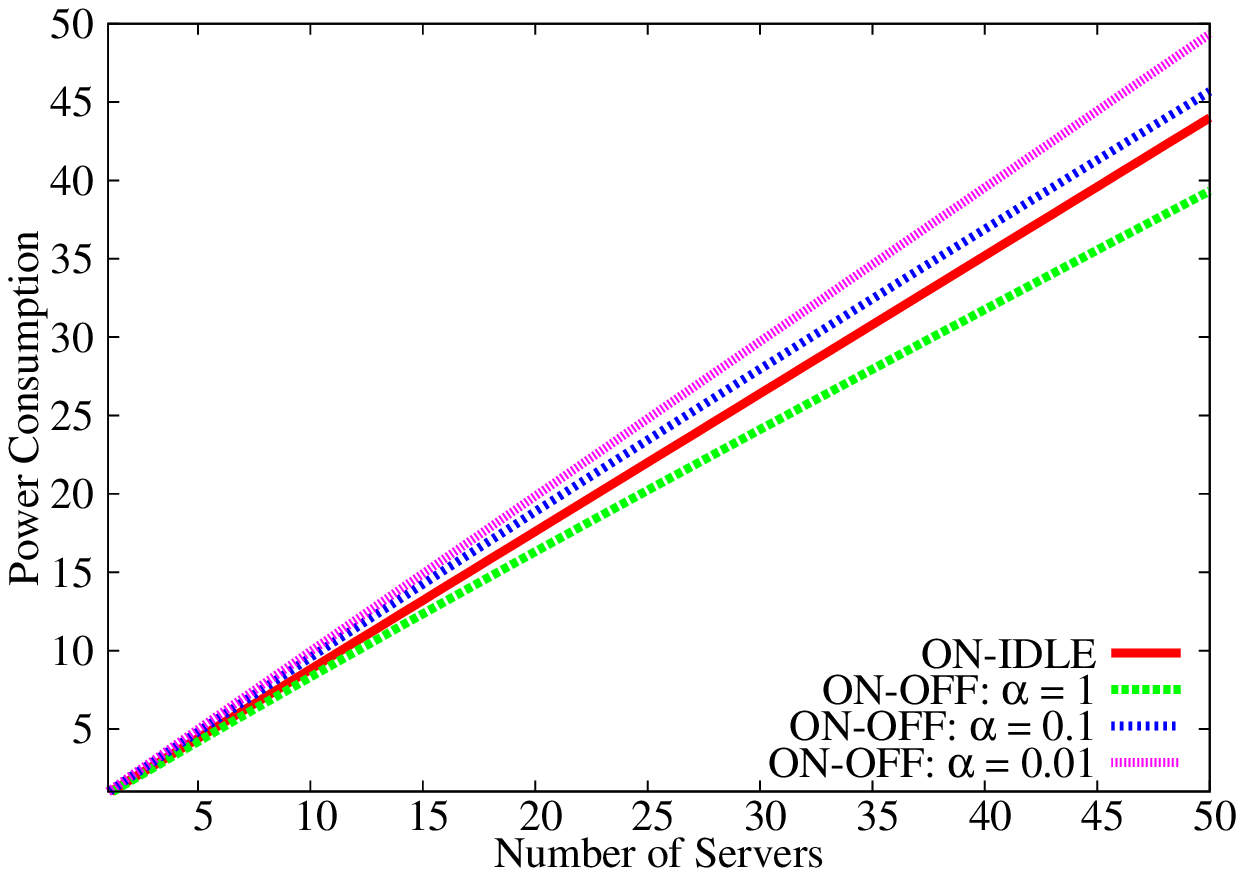}
\caption{Power consumption vs. $c$ ($\rho = 0.7$).}
\label{Power_consump_vs_serversnum_rho07:fig}
\end{center}
\end{minipage}
\end{tabular}
\end{figure}

\subsubsection{Effect of traffic intensity}
In this section, we show the effect of the traffic intensity on the power consumption ($Cost_{on-off}, Cost_{on-idle},$) for the cases $c=20$ and $c=50$ in Figure~\ref{Power_consump_vs_traffic_intensity_c20:fig} and Figure~\ref{Power_consump_vs_traffic_intensity_c50:fig}, respectively. In each figure, we plot three curves with $\alpha = 1, 0.1$ and 0.01. For comparison, we also plot the power consumption for the corresponding model without setup times. We observe in both figures that the ON-OFF policy with $\alpha = 1$ always outperforms that of ON-IDLE policy. However,  for the cases $\alpha = 0.1$ and 0.01, we observe that there exists some $\rho_\alpha$ for which the ON-OFF policy outperforms the ON-IDLE one for $\rho < \rho_\alpha$ while the latter is more power-saving than the former for the case $\rho > \rho_\alpha$.

\begin{figure}[htbp]
\begin{tabular}{cc}
\begin{minipage}{0.5\hsize}
\begin{center}
\includegraphics[scale=0.55]{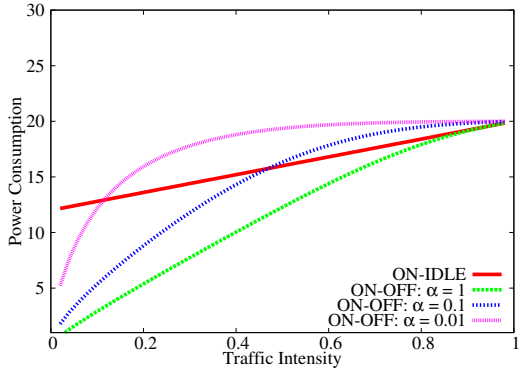}
\caption{Power consumption vs. $\rho$ ($c = 20$).}
\label{Power_consump_vs_traffic_intensity_c20:fig}
\end{center}
\end{minipage}
\begin{minipage}{0.5\hsize}
\begin{center}
\includegraphics[scale=0.55]{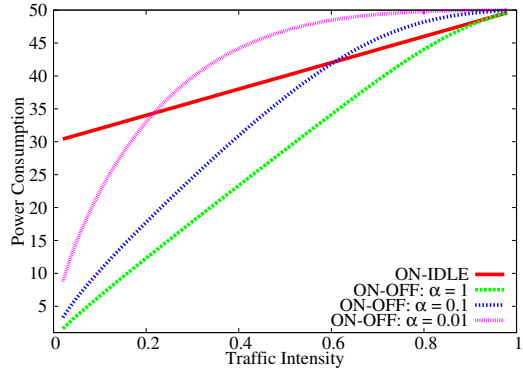}
\caption{Power consumption vs. $\rho$ ($c = 50$).}
\label{Power_consump_vs_traffic_intensity_c50:fig}
\end{center}
\end{minipage}
\end{tabular}
\end{figure}

\subsubsection{Effect of the setup cost}\label{cost_vs_setup_cost:sec}
Figure~\ref{pefor_vs_SetupCostofOneSwitch:fig} show the sensitivity of the cost of a setting up server on the power consumption $Cost_{on-off}$ where $C_a = 1$. Letting $r = C_s/C_a$, we observe that there exists some $r_{\rho}$ such that the ON-IDLE policy outperforms the ON-OFF policy for $r > r_\rho$ while former outperforms the latter for the case $r < r_\rho$. We also observe that $r_\rho$ decreases with the increase of $\rho$. This agrees with intuition. 

Figure~\ref{Total_Power_consump_vs_traffic_intensity_c50:fig} represents the total power consumption ($TotalCost_{on-off}$ with $C_{sw} = 1$) against the traffic intensity. We observe in the curves of $\alpha = 0.01, 0.1$ and 1 that the total power consumption monotonically increases as the traffic intensity increases. Interestingly, we observe that for the case $\alpha = 10$ and 100, the total power consumption increases as $\rho$ increases (for a relatively small $\rho$) and then decreases as $\rho$ increases (for a relatively large $\rho$). At the first glance, it may not be intuitive that the total power consumption decreases with the increase in $\rho$. However, this is due to the relation of $\mathbb{E} [S_r]$ and $\rho$ which will be investigated in detail in Figure~\ref{Number_of_Switches_c4050:fig}.

\begin{figure}[htbp]
\begin{tabular}{cc}
\begin{minipage}{0.5\hsize}
\begin{center}
\includegraphics[scale=0.55]{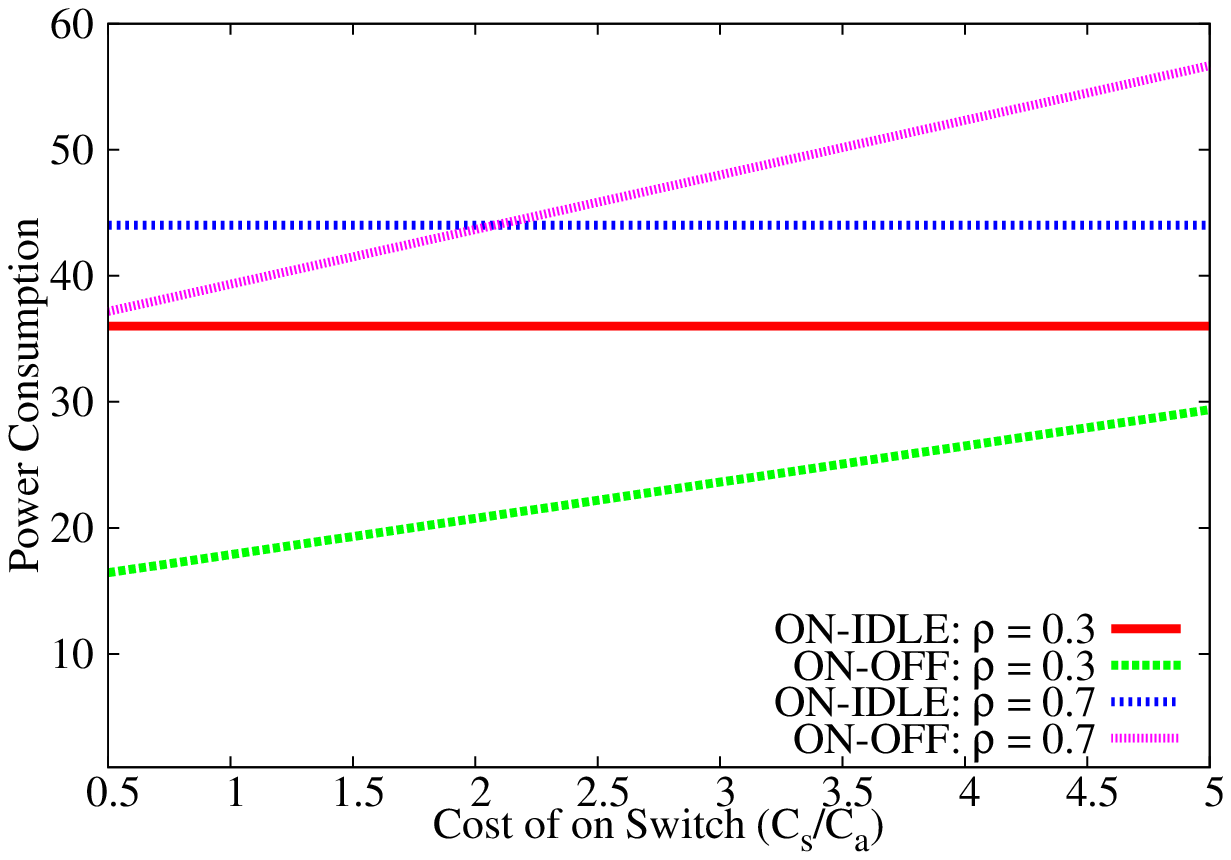}
\caption{Power consumption vs. $C_s/C_a$.}
\label{pefor_vs_SetupCostofOneSwitch:fig}
\end{center}
\end{minipage}
\begin{minipage}{0.5\hsize}
\begin{center}
\includegraphics[scale=0.55]{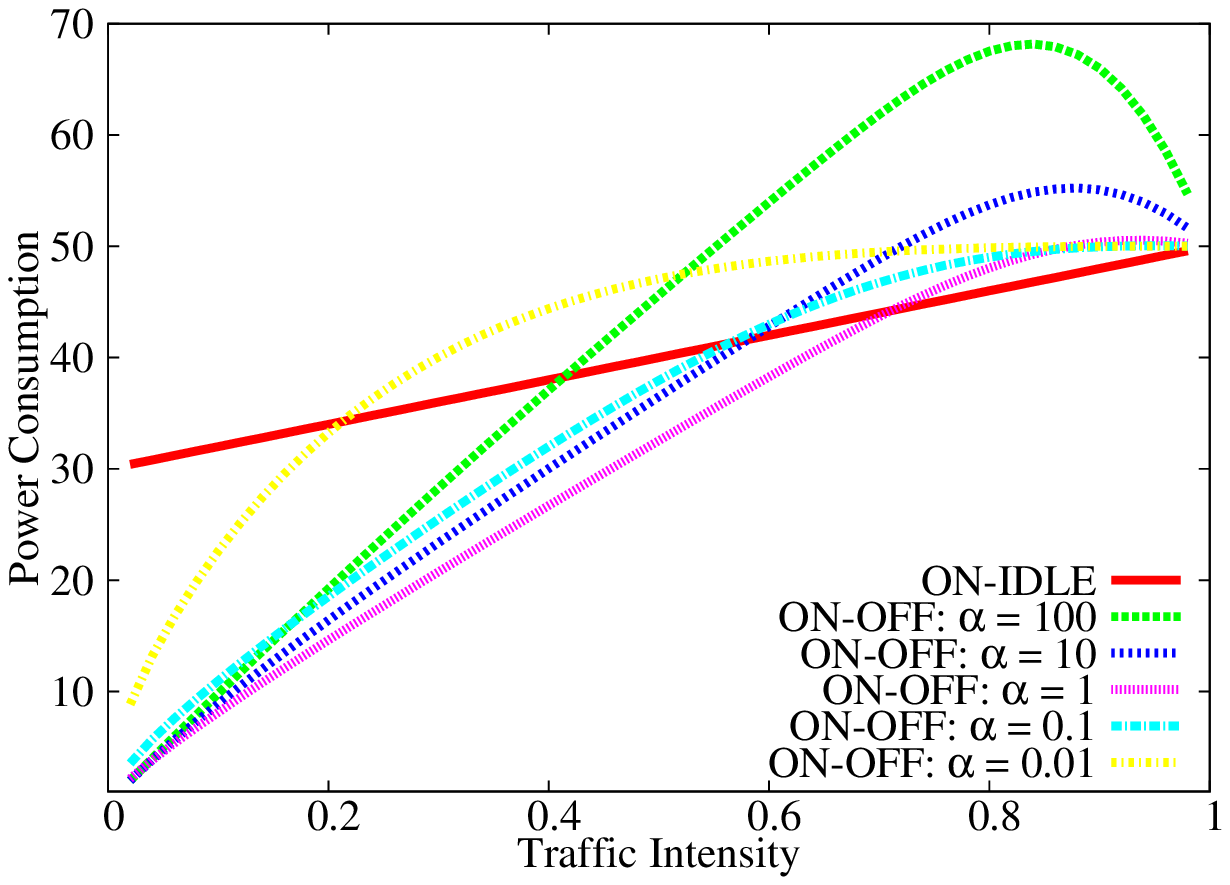}
\caption{Total Pow. consump. vs. $\rho$ ($c = 50$).}
\label{Total_Power_consump_vs_traffic_intensity_c50:fig}
\end{center}
\end{minipage}
\end{tabular}
\end{figure} 
\begin{figure}[htbp]
\end{figure}

\subsubsection{Mean number of switches}
In this section, we investigate the property of the switching rate $\mathbb{E} [S_r]$, i.e., the mean number of switches per a time unit. In particular, Figure~\ref{Number_of_Switches_c4050:fig} shows the switching rate against the traffic intensity. We observe that the switching rate increases with the traffic intensity under a light traffic regime while it decreases with $\rho$ in relatively heavy traffic regime. The reason is as follows. Almost all the servers are OFF in light traffic regime while a large percent of servers are ON in heavy traffic. Thus, in light traffic regime, increasing the traffic intensity implies the increase in the number of switches from OFF to ON. However, in heavy traffic regime almost all the servers are already ON. As a result, increasing the traffic intensity does not lead to further increase in the switching rate. This suggests that from the switching rate point of view, the ON-OFF policy is preferable in a relatively light traffic regime or a relatively heavy traffic one.

Figure~\ref{Number_of_Switches_vs_serversnum_rho07:fig} shows the switching rate against the number of servers. We observe that the switching rate increases with the number of servers. Moreover, the curves for the case $\alpha = 0.1$ and $\alpha = 0.01$ are almost linear while that for the case $\alpha = 1$ is not linear.

\begin{figure}[htbp]
\begin{tabular}{cc}
\begin{minipage}{0.5\hsize}
\begin{center}
\includegraphics[scale=0.55]{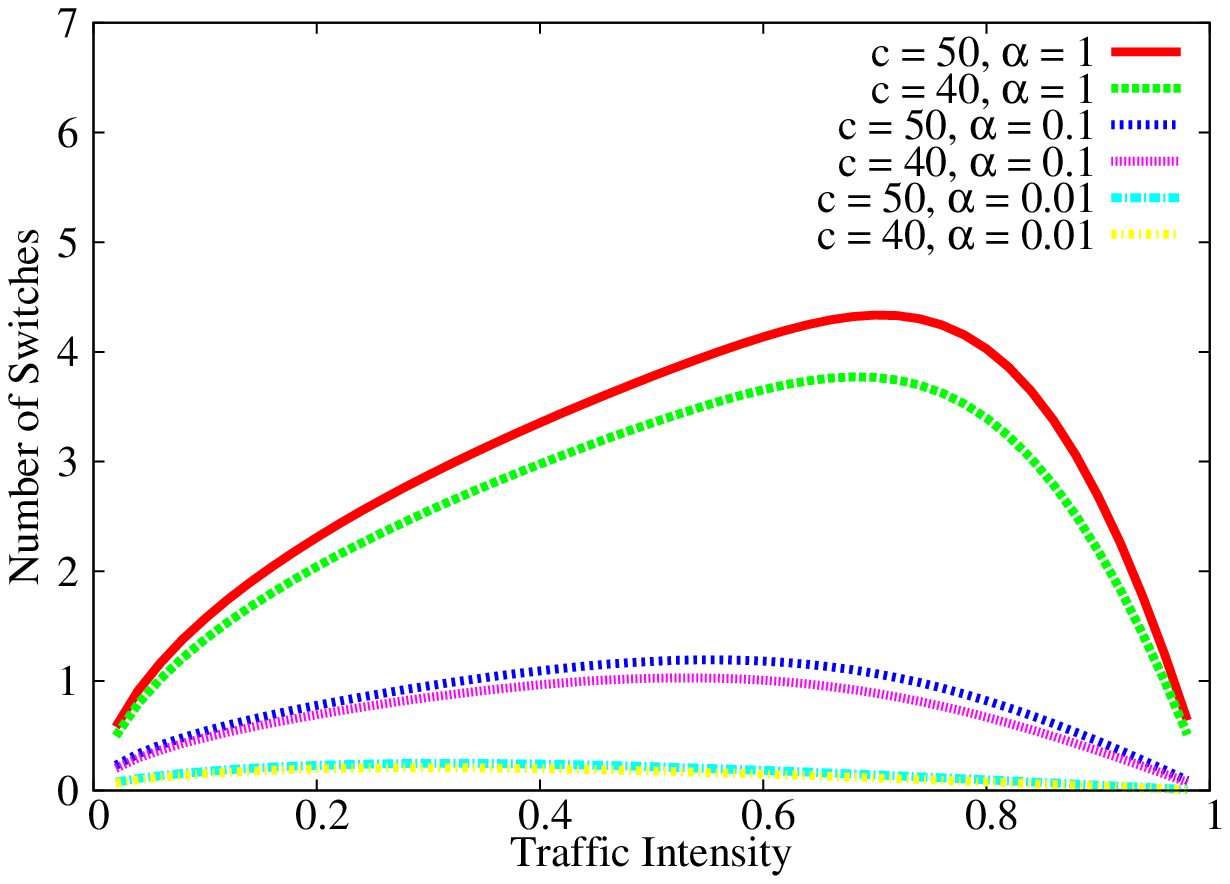}
\caption{Switching rate vs. $\rho$ ($c = 40, 50$).}
\label{Number_of_Switches_c4050:fig}
\end{center}
\end{minipage}
\begin{minipage}{0.5\hsize}
\begin{center}
\includegraphics[scale=0.55]{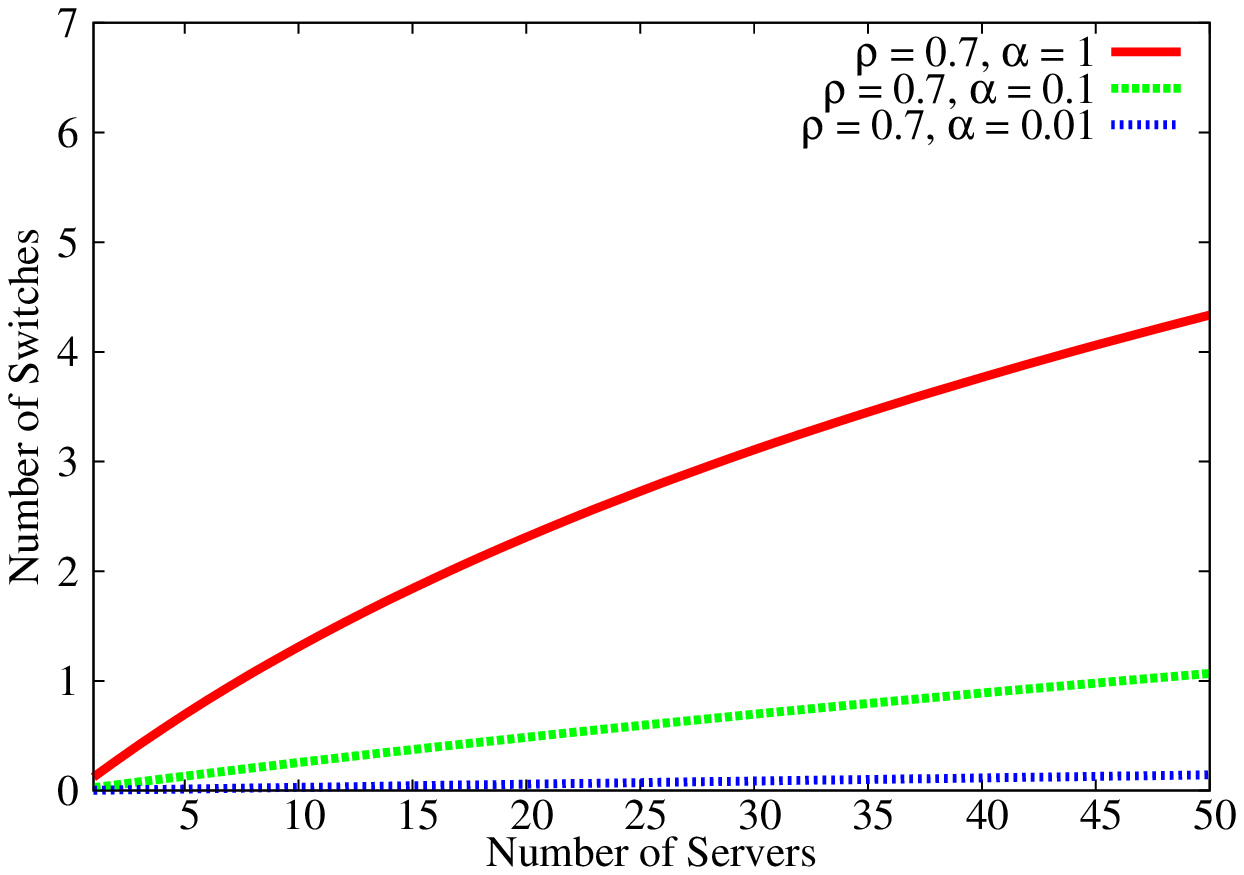}
\caption{Switching rate vs. the $c$ ($\rho = 0.7$).}
\label{Number_of_Switches_vs_serversnum_rho07:fig}
\end{center}
\end{minipage}
\end{tabular}
\end{figure}

\section{Conclusion and future works}\label{conclusion:sec}
In this paper, we have presented a detailed analysis for the M/M/$c$/Setup model with ON-OFF policy for data centers. Using a generating function approach, we have derived explicit solutions for the generating functions from which we have obtained recursive formulae for the factorial moments. The generating function approach yields a conditional decomposition for the queue length. We also have observed that the model belongs to a special QBD class where the rate matrix of the homogeneous part is explicitly obtained. The boundary part also possesses some special structure allowing us to obtain the joint stationary distribution with the complexity of $O(c^2)$ by generating function approach and $O(c^3)$ by the matrix analytic method. Our numerical results have provided some insights into the performance of the system. We have found the range of the parameters under which the ON-OFF policy outperforms the ON-IDLE policy. We have pointed out the equivalence between the two methodologies. 

In real world data center, in order to reduce the waiting time, a fixed number of servers may be kept ON all the time. The extension of the current model to this case may be worth to investigate. Other extensions include a threshold policy which turns ON and OFF the servers according to the load of the system.

\section*{Acknowledgements}
The author would like to thank two anonymous referees and the associate editor whose comments helped to improve the presentation of the paper.
The author would like to thank Professor Herwig Bruneel of Ghent University and Professor Onno Boxma of Eindhoven University of Technology for useful remarks on the conditional decomposition. This research was supported in part by Japan Society for the Promotion of Science, JSPS Grant-in-Aid for Young Scientists (B), Grant Number 2673001.

\end{document}